\newtheorem{thm}{Theorem}
\newtheorem{lem}{Lemma}
\newtheorem{obs}{Observation}
\newtheorem{cla}{Claim}
\newcommand{\Rset}{\mathbb{R}}
\newcommand{\Zset}{\mathbb{Z}}
\newenvironment{keyword}{\par{\noindent\bf Keywords:}}
\begin{document}

\title{The robust recoverable spanning tree problem with interval costs is polynomially solvable} 

\author{Mikita Hradovich$^\dag$, Adam Kasperski$^\ddag$,  Pawe{\l} Zieli{\'n}ski$^\dag$\\
          {\small \textit{$^\dag$Faculty of Fundamental Problems of Technology,}}\\
	{\small \textit{Wroc{\l}aw University of Technology,  Wroc{\l}aw, Poland}}\\
	 {\small \textit{$^\ddag$Faculty of Computer Science and Management,}}\\
	{\small \textit{Wroc{\l}aw University of Technology,  Wroc{\l}aw, Poland}}\\
	{\small \texttt{\{mikita.hradovich,adam.kasperski,pawel.zielinski\}@pwr.edu.pl}}
}

  \date{}
    
\maketitle

\begin{abstract}
In this paper the robust recoverable spanning tree problem with interval edge costs is considered. The complexity of this problem has remained open to date. It is shown that the problem is polynomially solvable, by using an iterative relaxation method. A generalization of this idea to the robust recoverable matroid basis problem is also presented.
Polynomial algorithms for both  robust recoverable problems are proposed.
\end{abstract}

\begin{keyword}
robust optimization; interval data; spanning tree; matroids
\end{keyword}

\section{Introduction}

In this paper, we wish to investigate the robust recoverable version of the following \emph{minimum spanning tree} problem.
We are given a connected graph~$G=(V,E)$, where $|V|=n$ and $|E|=m$. Let $\Phi$ be the set of all \emph{spanning trees} of~$G$. For each edge $e\in E$ a nonnegative cost $c_e$ is given. We seek a spanning tree of~$G$  of the minimum total cost. The minimum spanning tree problem can be solved in polynomial time by several well known algorithms~(see, e.g.~\cite{AMO93}). 
In this paper we consider the \emph{robust recoverable model}, previously discussed in~\cite{B11,B12,LLMS09}. We are given first stage edge costs $C_e$, $e\in E$, recovery parameter $k\in\{0,\dots,n-1\}$, and uncertain second stage (recovery) edge costs, modeled by scenarios.  Namely, each particular realization of the second stage costs $S=(c_e^S)_{e\in E}$ is called a \emph{scenario} and the set of all possible scenarios is denoted by $\mathcal{U}$.
In the \emph{robust recoverable spanning tree} problem (RR~ST, for short), we choose an initial spanning tree $X$ in the first stage. The cost of this tree is equal to $\sum_{e\in X} C_e$. Then, after scenario $S\in\mathcal{U}$ reveals,  $X$ can be modified by exchanging up to $k$ edges. This new tree is denoted by $Y$, where $|Y\setminus X|=|X\setminus Y| \leq k$. The second stage cost of $Y$ under scenario $S\in \mathcal{U}$ is equal to $\sum_{e\in Y} c_e^S$. Our goal is to find a pair of trees $X$ and $Y$ such that $|X\setminus Y|\leq k$, which minimize the total fist and second stage cost $\sum_{e\in X} C_e +\sum_{e\in Y} c_e^S$ in the worst case. Hence, the problem RR~ST can be formally stated as follows:

\begin{equation}
\textsc{RR ST}: \; \min_{X\in \Phi} \left(\sum_{e\in X} C_e + \max_{S\in \mathcal{U}}\min_{Y\in \Phi^k_{X}} \sum_{e\in Y} c_e^S\right),
\label{rrp}
\end{equation}
where 
$\Phi^k_{X}=\{Y\in\Phi \,:\,  |Y\setminus X|\leq k\}$ is the \emph{recovery set}, i.e. the set of possible solutions in 
the second, recovery stage.

The RR~ST problem has been recently discussed in a number of papers. It is a special case of the robust spanning tree problem with  incremental recourse considered in~\cite{NO13}. Furthermore, if $k=0$ and $C_e=0$ for each $e\in E$, then the problem is equivalent to the robust min-max spanning tree problem investigated in~\cite{KY97, KZ11}.
 The complexity of RR~ST depends on the way in which scenario set $\mathcal{U}$ is defined. If $\mathcal{U}=\{S_1,\dots,S_K\}$ contains $K\geq 1$, explicitly listed scenarios, then the problem is known to be NP-hard for $K=2$ and any constant $k\in\{0,\dots,n-1\}$~\cite{KKZ14}. Furthermore, it becomes strongly NP-hard and not at all approximable when both $K$ and $k$ are  part of input~\cite{KKZ14}. Assume now that the second stage cost of each edge $e\in E$ is known to belong to the closed interval $[c_e, c_e+d_e]$, where $d_e\geq 0$. Scenario set $\mathcal{U}^l$ is then the subset of the Cartesian product $\prod_{e\in E}[c_e, c_e+d_e]$ such that in each scenario in $\mathcal{U}^l$, the costs of at most $l$ edges are greater than their nominal values $c_e$,
$l\in\{0,\dots,m\}$.
 Scenario set $\mathcal{U}^l$ has been proposed in~\cite{BS03}. The parameter $l$ allows us to model the degree of uncertainty. Namely, if $l=0$ then $\mathcal{U}$ contains only one scenario. 
 The problem RR~ST for scenario set $\mathcal{U}^l$ is known to be strongly NP-hard when $l$ is a part of input~\cite{NO13}. In fact, the inner problem, $\max_{S\in \mathcal{U}^l}\min_{Y\in \Phi^k_X} \sum_{e\in Y} {c_e^S}$, called the \emph{adversarial problem}, is then strongly NP-hard~\cite{NO13}. 
 On the other hand, $\mathcal{U}^m$ is the Cartesian product of all the uncertainty intervals, and represents the traditional interval uncertainty representation~\cite{KY97}. 

The complexity of RR~ST with scenario set $\mathcal{U}^m$ is open to date. 
In~\cite{SAO09} the \emph{incremental spanning tree} problem was discussed. In this problem we are given an initial spanning tree $X$ and we seek a spanning tree $Y\in \Phi^k_X$ whose total cost is minimal. It is easy to see that this problem is the inner one in RR~ST, where $X$ is fixed and $\mathcal{U}$ contains only one scenario. The incremental spanning tree problem can be solved in polynomial time by applying the Lagrangian relaxation technique~\cite{SAO09}. 
In~\cite{B11} a polynomial algorithm for  a more general \emph{recoverable matroid basis} problem
(RR~MB, for short)
 with scenario set $\mathcal{U}^m$
 was proposed, provided that the recovery parameter~$k$ is constant
and, in consequence, for   RR~ST (a spanning tree is a  graphic matroid).
Unfortunately,  the algorithm is exponential in~$k$. 
No other result on the problem is known to date. In particular, no polynomial time algorithm  has been developed
when~$k$ is a part of the input.

In this paper we show that RR~ST for the interval uncertainty representation (i.e. for scenario set $\mathcal{U}^m$) is polynomially solvable (Section~\ref{srrsp}). We apply a technique called the \emph{iterative relaxation}, whose framework was described in~\cite{LRS11}. The idea is to construct a linear programming relaxation of the problem and show that at least one variable in each optimum vertex solution is integer. Such a variable allows us to add an edge to the solution built and recursively solve the relaxation of the smaller problem. We also show that this technique allows us to solve the recoverable matroid basis problem (RR~MB) for the interval uncertainty representation in polynomial time
(Section~\ref{srrmb}).
We provide polynomial algorithms for RR~ST and RR~MB.

\section{Robust recoverable  spanning tree problem}
\label{srrsp}

In this section we will  use 
the iterative relaxation method~\cite{LRS11} to construct a polynomial algorithm 
for RR~ST under scenario set $\mathcal{U}^m$. Notice first that, in this case, the formulation~(\ref{rrp}) can be rewritten as follows:
\begin{align}
\textsc{RR ST}: \; \min_{X\in\Phi}
\left (\sum_{e\in X} C_e  +    \min_{Y\in \Phi^{k}_{X}}\sum_{e\in Y}(c_e+d_e) \right).
\label{rrst}
\end{align}
In problem~(\ref{rrst}) we need to find a pair of spanning tree $X\in \Phi$ and $Y\in\Phi^k_X$. 
Since $|X|=|Y|=|V|-1$, the problem~(\ref{rrst}) is equivalent the following mathematical programming problem: 
\begin{equation}
 \begin{array}{ll}
 \min &\sum_{e\in X}C_e + \sum_{e\in Y}(c_e+d_e)\\
\text{s.t. } & |X \cap Y| \geq |V|-1-k, \\
       & X,Y\in \Phi.
 \end{array}
 \label{mpst}
\end{equation}

We now set up some additional notation. Let $V_X$ and $V_Y$ be subsets of vertices $V$, and $E_X$ and $E_Y$ be subsets of edges $E$, which induce connected graphs (multigraphs) 
 $G_X=(V_X, E_X)$ and $G_Y=(V_Y,E_Y)$, respectively. Let $E_Z$ be a subset of $E$ such that $E_Z\subseteq E_X \cup E_Y$ and $|E_Z|\geq L$ for some fixed
integer~$L$. We will use $E_X(U)$ (resp. $E_Y(U)$) to denote the set of edges that has both endpoints in a given subset of vertices $U\subseteq V_X$ (resp. $U\subseteq V_Y$).

Let us consider the following linear program, denoted by $LP_{RRST}(E_{X},V_{X},E_{Y},V_{Y}, E_{Z},L)$,
that we will substantially use in the algorithm for RR~ST:
\begin{eqnarray}
    \min        \sum_{e\in E_X}C_e x_e + \sum_{e\in E_{Y}}(c_e+d_e) y_e&& \label{lpst_o}\\
    \text{s.t. }\;\;  \sum_{e\in E_X} x_e = |V_X|-1, && \label{lpst_0} \\
                      \sum_{e\in E_X(U)} x_e \leq  |U|-1, && \forall U\subset V_X, \label{lpst_1}\\
                      -x_e+z_e \leq  0,                              &&\forall e\in E_{X}\cap E_{Z},  \label{lpst_2}\\
                      \sum_{e\in E_Z} z_e= L, &&  \label{lpst_3}\\
                       z_e -y_e \leq  0,                              &&\forall e\in E_{Y}\cap E_{Z},  \label{lpst_4}\\
                       \sum_{e\in E_Y} y_e = |V_Y|-1, &&   \label{lpst_5}\\
                      \sum_{e\in E_Y(U)} y_e \leq  |U|-1, && \forall U\subset V_Y,  \label{lpst_6}\\
                      x_e\geq 0, && \forall e\in E_{X},  \label{lpst_7}\\
                      z_e\geq 0, && \forall e\in E_{Z},  \label{lpst_8}\\
                      y_e\geq 0, && \forall e\in E_{Y}.  \label{lpst_9}
\end{eqnarray}
It is easily seen that if we set $E_{X}= E_{Z}= E_{Y}=E$, $V_{X}=V_{Y}=V$, $L=|V|-1-k$, 
then the linear program $LP_{RRST}(E_{X},V_{X},E_{Y},V_{Y}, E_{Z},L)$ is a linear programming relaxation of~(\ref{mpst}).
Indeed, the binary variables $x_e, y_e, z_e\in \{0,1\}$ indicate then the spanning trees $X$ and $Y$ and their common part $X\cap Y$, respectively.  Moreover, the constraint~(\ref{lpst_3}) takes the form of equality, instead of the inequality,
since the variables~$z_e$, $e\in E_Z$, are not present in the objective function~(\ref{lpst_o}).
Problem $LP_{RRST}(E_{X},V_{X},E_{Y},V_{Y}, E_{Z},L)$ has exponentially many constraints. However, the constraints (\ref{lpst_0}), (\ref{lpst_1}) and  (\ref{lpst_5}),  (\ref{lpst_6})
are the spanning tree ones for graphs $G_X=(V_X,E_X)$ and $G_Y=(V_Y,E_Y)$, respectively. Fortunately,
there exits a polynomial time separation oracle over such constraints~\cite{TLMLAW95}.
 Clearly, separating over the remaining constraints, i.e. (\ref{lpst_2}),  (\ref{lpst_3}) and  (\ref{lpst_4})
 can be done in a polynomial time.  In consequence, an optimal vertex solution to the problem can be found in polynomial time.   
 It is also worth pointing out that, alternatively, one 
 can  rewrite  (\ref{lpst_o}), (\ref{lpst_0}), (\ref{lpst_1}), (\ref{lpst_7}),   (\ref{lpst_5}) and  (\ref{lpst_9})
 in  a ``compact'' form that has the polynomial number of variables and constraints (see~\cite{TLMLAW95}).

Let us focus now on a vertex solution  $(\pmb{x},\pmb{z},\pmb{y})\in \Rset^{|E_X|\times |E_Z|\times |E_Y|}_{\geq 0}$ of the linear programming problem
$LP_{RRST}(E_{X},V_{X},E_{Y},V_{Y}, E_{Z},L)$. 
If $E_Z=\emptyset$, then the only constraints being left in (\ref{lpst_o})-(\ref{lpst_9})
are the spanning tree constraints. Thus $\pmb{x}$ and  $\pmb{y}$ are 0-1 incidence
vectors of
the spanning trees $X$ and $Y$, respectively (see~\cite[Theorem~3.2]{TLMLAW95}).

We now turn to a more involved case, when $E_X\not=\emptyset$,
 $E_Y\not=\emptyset$ and $E_Z\not=\emptyset$. 
 We first  reduce the sets $E_X$, $E_Y$ and $E_Z$ by removing  all edges~$e$
 with $x_e=0$, or $y_e=0$, or $z_e=0$. Removing these edges does not
 change the feasibility and the cost of the vertex solution $(\pmb{x},\pmb{z},\pmb{y})$. Note that $V_X$ and $V_Y$ remain unaffected.
 From now on, we can assume that variables corresponding to all edges from 
 $E_X$, $E_Y$ and $E_Z$ are positive, i.e. $x_e>0$, $e\in E_X$, $y_e>0$, $e\in E_Y$ and
 $z_e>0$, $e\in E_Z$. Hence the constraints (\ref{lpst_7}), (\ref{lpst_8}) and (\ref{lpst_9}) are
 not taken into account, since they are not tight with respect to $(\pmb{x},\pmb{z},\pmb{y})$. 
 It is possible, after reducing $E_X$, $E_Y$, and $E_Z$,
to  characterize  $(\pmb{x},\pmb{z},\pmb{y})$  by  
 $|E_X|+ |E_Z|+|E_Y|$ constraints that are linearly independent and tight with respect to $(\pmb{x},\pmb{z},\pmb{y})$.
 
 Let $\mathcal{F}(\pmb{x})=\{U\subseteq V_X\,:\, \sum_{e\in E_X(U)}x_e =|U|-1\}$ and 
 $\mathcal{F}(\pmb{y})=\{U\subseteq V_Y\,:\, \sum_{e\in E_Y(U)}y_e =|U|-1\}$ stand for
 the sets of subsets of nodes that indicate the tight constraints (\ref{lpst_0}), (\ref{lpst_1}) and 
  (\ref{lpst_5}), (\ref{lpst_6}) for $\pmb{x}$ and $\pmb{y}$, respectively.
  Similarly we define the sets of edges that indicate the tight constraints (\ref{lpst_2}) and (\ref{lpst_4}) 
  with respect to  $(\pmb{x},\pmb{z},\pmb{y})$, namely
     $\mathcal{E}(\pmb{x},\pmb{z})=\{e\in E_{X}\cap E_{Z}\,:\, -x_e+z_e =  0\}$ and
    $\mathcal{E}(\pmb{z},\pmb{y})=\{e\in E_{Y}\cap E_{Z}\,:\, z_e -y_e =  0\}$.
    Let $\chi_X(W)$, $W\subseteq E_X$, (resp. $\chi_Z(W)$, $W\subseteq E_Z$, and $\chi_Y(W)$, $W\subseteq E_Y$)
 denote the characteristic vector in $\{0,1\}^{|E_X|}\times \{0\}^{|E_Z|}\times\{0\}^{|E_Y|}$
(resp.  $\{0\}^{|E_X|}\times \{0,1\}^{|E_Z|}\times\{0\}^{|E_Y|}$ and $\{0\}^{|E_X|}\times \{0\}^{|E_Z|}\times\{0,1\}^{|E_Y|}$)
that has~1 if $e\in W$ and 0 otherwise.

 We recall that two sets $A$ and $B$ are \emph{intersecting} if
 $A\cap B$, $A\setminus B$,  $B\setminus A$ are nonempty.
 A family of sets is \emph{laminar} if no two sets are intersecting (see, e.g.,~\cite{LRS11}). Observe that the number of subsets in  $\mathcal{F}(\pmb{x})$ and  $\mathcal{F}(\pmb{y})$ can be exponential.
Let  $\mathcal{L}(\pmb{x})$ (resp.  $\mathcal{L}(\pmb{y})$) be
 a \emph{maximal laminar subfamily} of $\mathcal{F}(\pmb{x})$ (resp. $\mathcal{F}(\pmb{y})$).
 The following lemma, which is a slight extension of \cite[Lemma~4.1.5]{LRS11},
 allows us to choose out of  $\mathcal{F}(\pmb{x})$ and  $\mathcal{F}(\pmb{y})$  
 certain   subsets that indicate linearly independent tight constraints. 
 \begin{lem}
 For $\mathcal{L}(\pmb{x})$ and  $\mathcal{L}(\pmb{y})$ the  following equalities:
\begin{align*}
\mathrm{span}(\{ \chi_X(E_X(U))\,:\, U\in \mathcal{L}(\pmb{x})\})&=
 \mathrm{span}(\{ \chi_X(E_X(U))\,:\, U\in \mathcal{F}(\pmb{x})\}),\\
 \mathrm{span}(\{ \chi_Y(E_Y(U))\,:\, U\in \mathcal{L}(\pmb{y})\})&=
 \mathrm{span}(\{ \chi_Y(E_Y(U))\,:\, U\in \mathcal{F}(\pmb{y})\})
\end{align*}
hold.
 \label{lamlem}
 \end{lem}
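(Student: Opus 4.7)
The plan is to apply a standard uncrossing argument separately to $\mathcal{F}(\pmb{x})$ and $\mathcal{F}(\pmb{y})$; the two equalities decouple because the vectors $\chi_X(\cdot)$ and $\chi_Y(\cdot)$ live in disjoint coordinate blocks of $\Rset^{|E_X|+|E_Z|+|E_Y|}$. I describe the argument for $\pmb{x}$; the one for $\pmb{y}$ is identical after swapping $X$ with $Y$.

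First I would establish the uncrossing identity: whenever $A,B \in \mathcal{F}(\pmb{x})$ are intersecting, both $A\cap B$ and $A\cup B$ lie in $\mathcal{F}(\pmb{x})$, and
\[
\chi_X(E_X(A)) + \chi_X(E_X(B)) = \chi_X(E_X(A\cap B)) + \chi_X(E_X(A\cup B)).
\]
To obtain this, I would add the two tightness equalities for $A$ and $B$, producing $|A|+|B|-2$, and compare with $\sum_{e\in E_X(A\cap B)} x_e + \sum_{e\in E_X(A\cup B)} x_e$, which the spanning-tree constraints~(\ref{lpst_0})--(\ref{lpst_1}) bound above by $|A\cap B|-1+|A\cup B|-1 = |A|+|B|-2$. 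Combined with $E_X(A\cap B) = E_X(A)\cap E_X(B)$ and $E_X(A)\cup E_X(B)\subseteq E_X(A\cup B)$, equality is forced throughout; the strict positivity $x_e>0$ ensured by the reduction preceding the lemma rules out any edge in $E_X(A\cup B)\setminus(E_X(A)\cup E_X(B))$, and then the characteristic-vector identity follows by multiplicity counting. (The case $A\cup B = V_X$ is covered directly by the equality constraint~(\ref{lpst_0}).)

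Second, I would argue by contradiction that the two spans already agree. Suppose some $S\in\mathcal{F}(\pmb{x})$ has $\chi_X(E_X(S))$ outside $\mathrm{span}\{\chi_X(E_X(U)) : U\in \mathcal{L}(\pmb{x})\}$, and pick such an $S$ minimizing the number of members of $\mathcal{L}(\pmb{x})$ that intersect it. Maximality of $\mathcal{L}(\pmb{x})$ forces this count to be positive, so fix $T\in\mathcal{L}(\pmb{x})$ intersecting $S$. A short laminar-family count shows that each of $S\cap T$ and $S\cup T$ intersects strictly fewer members of $\mathcal{L}(\pmb{x})$ than $S$ does, so by minimality of $S$ their characteristic vectors lie in the span of $\mathcal{L}(\pmb{x})$; the uncrossing identity just derived then places $\chi_X(E_X(S))$ in that span as well, a contradiction.

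The main obstacle is the bookkeeping step verifying that $S\cap T$ and $S\cup T$ genuinely see fewer intersecting members of $\mathcal{L}(\pmb{x})$ than $S$ does, which requires a short case analysis of how a laminar set $U$ can sit relative to $S$ and $T$. This is exactly the content of \cite[Lemma~4.1.5]{LRS11}, and the present lemma is essentially a verbatim application of that argument in parallel to the two independent coordinate blocks for $\pmb{x}$ and $\pmb{y}$.
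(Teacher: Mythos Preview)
Your proposal is correct and follows exactly the same approach as the paper: the paper's own proof of this lemma consists entirely of the sentence ``The proof is the same as that for the spanning tree in~\cite[Lemma~4.1.5]{LRS11},'' and what you have written is precisely the standard uncrossing argument from that reference, applied independently in the $\pmb{x}$- and $\pmb{y}$-blocks.
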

 \begin{proof}
 The proof is the same as that for the  spanning tree in~\cite[Lemma~4.1.5]{LRS11}.
 \end{proof}
 A trivial verification shows that the following observation is true: 
\begin{obs}
$V_X\in \mathcal{L}(\pmb{x})$ and $V_Y\in \mathcal{L}(\pmb{y})$.
\label{lamob} 
\end{obs}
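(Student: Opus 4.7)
The plan is straightforward and essentially reduces to unpacking the definitions of $\mathcal{F}(\pmb{x})$, $\mathcal{L}(\pmb{x})$, and ``intersecting''. I will do the argument for $V_X$; the one for $V_Y$ is symmetric.

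First I would check that $V_X$ itself belongs to $\mathcal{F}(\pmb{x})$. By definition, $E_X(V_X)$ is the set of edges with both endpoints in $V_X$, which (since $G_X=(V_X,E_X)$) is simply $E_X$. Hence the tightness condition $\sum_{e\in E_X(V_X)} x_e = |V_X|-1$ is exactly the equality constraint~(\ref{lpst_0}), which is tight at any feasible solution. So $V_X\in\mathcal{F}(\pmb{x})$.

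Next I would argue that $V_X$ is comparable (under inclusion) with every other element of $\mathcal{F}(\pmb{x})$. Every $U\in\mathcal{F}(\pmb{x})$ satisfies $U\subseteq V_X$ by definition, so for any such $U$ we have $V_X\setminus U \supseteq \emptyset$ but crucially $U\setminus V_X=\emptyset$. Hence $V_X$ and $U$ cannot be intersecting in the sense defined just before the lemma (that requires both $U\setminus V_X$ and $V_X\setminus U$ to be nonempty, along with their intersection). Consequently, adjoining $V_X$ to any laminar subfamily of $\mathcal{F}(\pmb{x})$ yields another laminar subfamily of $\mathcal{F}(\pmb{x})$.

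Finally I would invoke the maximality of $\mathcal{L}(\pmb{x})$: since $\mathcal{L}(\pmb{x})\cup\{V_X\}$ is still laminar and still contained in $\mathcal{F}(\pmb{x})$, it must already equal $\mathcal{L}(\pmb{x})$, giving $V_X\in\mathcal{L}(\pmb{x})$. The identical reasoning, with (\ref{lpst_5}) playing the role of (\ref{lpst_0}), gives $V_Y\in\mathcal{L}(\pmb{y})$. There is no real obstacle here — the only thing to be careful about is stating clearly that ``comparable'' sets are not ``intersecting'' under the paper's convention, so that maximality of the laminar subfamily forces the inclusion of the top element.
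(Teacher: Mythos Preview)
Your proposal is correct and is precisely the ``trivial verification'' the paper alludes to without spelling out: show $V_X\in\mathcal{F}(\pmb{x})$ via the equality constraint~(\ref{lpst_0}), observe that $V_X$ contains every $U\in\mathcal{F}(\pmb{x})$ and hence is non-intersecting with each, and conclude by maximality of $\mathcal{L}(\pmb{x})$. There is nothing to add.
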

 
 We are now ready to give a characterization  of a vertex solution.
\begin{lem}
Let $(\pmb{x},\pmb{z},\pmb{y})$ be a vertex solution of 
$LP_{RRST}(E_{X},V_{X},E_{Y},V_{Y}, E_{Z})$  such that 
$x_e>0$, $e\in E_X$, $y_e>0$, $e\in E_Y$ and
 $z_e>0$, $e\in E_Z$. Then there exist laminar families $\mathcal{L}(\pmb{x})\not=\emptyset$
 and $\mathcal{L}(\pmb{y})\not=\emptyset$ and subsets 
 $E(\pmb{x},\pmb{z}) \subseteq \mathcal{E}(\pmb{x},\pmb{z})$ and
 $E(\pmb{z},\pmb{y}) \subseteq \mathcal{E}(\pmb{z},\pmb{y})$
 that must satisfy the following:
 \begin{itemize}
 \item[(i)] $|E_X|+|E_Z|+|E_Y|=|\mathcal{L}(\pmb{x})|+|E(\pmb{x},\pmb{z})|+|E(\pmb{z},\pmb{y})|
 +|\mathcal{L}(\pmb{y})|+1$,
 \item[(ii)]
 the vectors in
 $\{ \chi_X(E_X(U))\,:\, U\in \mathcal{L}(\pmb{x})\}\cup \{ \chi_Y(E_Y(U))\,:\, U\in \mathcal{L}(\pmb{y})\}
 \cup \{ -\chi_X(\{e\})+ \chi_Z(\{e\})  \,:\, e\in E(\pmb{x},\pmb{z})\}
 \cup \{ \chi_Z(\{e\})- \chi_Y(\{e\})  \,:\, e\in E(\pmb{z},\pmb{y})\} \cup \{\chi_Z(E_Z)\}$ 
 are linearly independent.
\end{itemize} 
 \label{ranklem}
\end{lem}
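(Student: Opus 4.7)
The plan is to invoke the standard vertex characterization from iterative LP rounding: since $(\pmb{x},\pmb{z},\pmb{y})$ is a vertex of the polytope defined by (\ref{lpst_o})--(\ref{lpst_9}) and, after the preliminary reduction, the nonnegativity inequalities (\ref{lpst_7})--(\ref{lpst_9}) are all strict, the solution is the unique solution to the subsystem of (\ref{lpst_0})--(\ref{lpst_6}) tight at $(\pmb{x},\pmb{z},\pmb{y})$. The rank of this tight subsystem must therefore equal the total number of variables $|E_X|+|E_Z|+|E_Y|$. The tight constraints are exactly: the spanning-tree inequalities indexed by $\mathcal{F}(\pmb{x})$ and $\mathcal{F}(\pmb{y})$, the edge-coupling equalities indexed by $\mathcal{E}(\pmb{x},\pmb{z})$ and $\mathcal{E}(\pmb{z},\pmb{y})$, together with the single equality (\ref{lpst_3}) whose normal vector is $\chi_Z(E_Z)$.

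By Lemma~\ref{lamlem} the spans of the spanning-tree vectors indexed by $\mathcal{F}(\pmb{x})$ and $\mathcal{F}(\pmb{y})$ coincide with the spans over the maximal laminar subfamilies $\mathcal{L}(\pmb{x})$ and $\mathcal{L}(\pmb{y})$, which are nonempty by Observation~\ref{lamob}. Replacing $\mathcal{F}$'s by $\mathcal{L}$'s therefore preserves the rank. I would build the claimed basis in two stages. Stage~1: take the vectors $\{\chi_X(E_X(U)):U\in\mathcal{L}(\pmb{x})\}\cup\{\chi_Y(E_Y(U)):U\in\mathcal{L}(\pmb{y})\}\cup\{\chi_Z(E_Z)\}$. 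These three groups have pairwise disjoint coordinate supports ($E_X$, $E_Y$, and $E_Z$, respectively), so their mutual linear independence reduces to the within-group linear independence of the laminar vectors---a standard uncrossing property of tight sets in the spanning-tree polytope that also underlies Lemma~\ref{lamlem}. Stage~2: greedily augment by vectors $-\chi_X(\{e\})+\chi_Z(\{e\})$ for $e\in\mathcal{E}(\pmb{x},\pmb{z})$ and $\chi_Z(\{e\})-\chi_Y(\{e\})$ for $e\in\mathcal{E}(\pmb{z},\pmb{y})$ that keep the set linearly independent; the edges selected define $E(\pmb{x},\pmb{z})$ and $E(\pmb{z},\pmb{y})$.

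Since the tight system (after replacing $\mathcal{F}$'s by $\mathcal{L}$'s) still has rank $|E_X|+|E_Z|+|E_Y|$, Stage~2 must terminate at a basis of exactly that cardinality, which rearranges to the counting identity in~(i); the linear independence statement~(ii) then holds by construction. The delicate point that I expect to be the main obstacle is justifying Stage~1---that one can always \emph{simultaneously} force all of $\mathcal{L}(\pmb{x})$, all of $\mathcal{L}(\pmb{y})$, and $\chi_Z(E_Z)$ into the basis. Cross-group independence is immediate from the disjoint-support observation, but within each side one needs the fact that the characteristic vectors of a laminar family of tight spanning-tree sets (each of size at least two, singletons giving the trivial $0=0$ constraint) are linearly independent; this I would extract from the uncrossing argument of~\cite[Chapter~4]{LRS11} underlying Lemma~\ref{lamlem} rather than re-prove from scratch.
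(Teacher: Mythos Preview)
Your proposal is correct and follows essentially the same route as the paper's proof: both start from the vertex characterization (the tight constraints among (\ref{lpst_0})--(\ref{lpst_6}) must have rank $|E_X|+|E_Z|+|E_Y|$), replace $\mathcal{F}(\pmb{x})$ and $\mathcal{F}(\pmb{y})$ by the maximal laminar families via Lemma~\ref{lamlem}, first include all of $\mathcal{L}(\pmb{x})$, $\mathcal{L}(\pmb{y})$ and $\chi_Z(E_Z)$ (independent by disjoint supports and the within-group uncrossing fact from \cite[Chapter~4]{LRS11}), and then greedily extend by tight coupling constraints (\ref{lpst_2}) and (\ref{lpst_4}) to reach a basis. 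Your discussion of the ``delicate point'' (within-group independence of the laminar vectors) is in fact more explicit than the paper, which simply asserts this step.
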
 
\begin{proof}
The vertex $(\pmb{x},\pmb{z},\pmb{y})$ can be uniquely characterized by any set of
 linearly independent constraints
 with the cardinality of $|E_X|+ |E_Z|+|E_Y|$,
 chosen  from among the constraints
 (\ref{lpst_0})-(\ref{lpst_6}),  tight  with respect  to $(\pmb{x},\pmb{z},\pmb{y})$.
  We construct such set by choosing  a maximal subset of 
  linearly independent tight constraints that characterizes $(\pmb{x},\pmb{z},\pmb{y})$.
   Lemma~\ref{lamlem}  shows that there exist maximal laminar subfamilies 
   $\mathcal{L}(\pmb{x})\subseteq \mathcal{F}(\pmb{x})$ and
   $\mathcal{L}(\pmb{y})\subseteq \mathcal{F}(\pmb{y})$ 
   such that 
   $\mathrm{span}(\{ \chi_X(E_X(U))\,:\, U\in \mathcal{L}(\pmb{x})\})=
 \mathrm{span}(\{ \chi_X(E_X(U))\,:\, U\in \mathcal{F}(\pmb{x})\})$ and 
 $\mathrm{span}(\{ \chi_Y(E_Y(U))\,:\, U\in \mathcal{L}(\pmb{y})\})=
 \mathrm{span}(\{ \chi_Y(E_Y(U))\,:\, U\in \mathcal{F}(\pmb{y})\}))$.
 Observation~\ref{lamob} implies  $\mathcal{L}(\pmb{x})\not=\emptyset$ and 
  $\mathcal{L}(\pmb{y})\not=\emptyset$.
  Moreover, it is evident that
  $\mathrm{span}(\{ \chi_X(E_X(U))\,:\, U\in \mathcal{L}(\pmb{x})\} \cup 
  \{ \chi_Y(E_Y(U))\,:\, U\in \mathcal{L}(\pmb{y})\})=
  \mathrm{span}(\{ \chi_X(E_X(U))\,:\, U\in \mathcal{F}(\pmb{x})\} \cup 
  \{ \chi_Y(E_Y(U))\,:\, U\in \mathcal{F}(\pmb{y})\})$. Thus $\mathcal{L}(\pmb{x})\cup \mathcal{L}(\pmb{y})$
  indicate certain linearly independent tight constraints that have been already  included in the set constructed.
  We add (\ref{lpst_3})  to the set constructed. Obviously, it still consists of linearly independent constraints. 
  We complete forming the set by choosing  a maximal number of tight constraints 
  from among the ones (\ref{lpst_2}) and  (\ref{lpst_4}), such that they form 
  a linearly independent set with the constraints previously selected.
  We characterize these constraints by the sets of edges
  $E(\pmb{x},\pmb{z}) \subseteq \mathcal{E}(\pmb{x},\pmb{z})$ and
 $E(\pmb{z},\pmb{y}) \subseteq \mathcal{E}(\pmb{z},\pmb{y})$.
 Therefore 
 the vectors in
 $\{ \chi_X(E_X(U))\,:\, U\in \mathcal{L}(\pmb{x})\}\cup \{ \chi_Y(E_Y(U))\,:\, U\in \mathcal{L}(\pmb{y})\}
 \cup \{ -\chi_X(\{e\})+ \chi_Z(\{e\})  \,:\, e\in E(\pmb{x},\pmb{z})\}
 \cup \{ \chi_Z(\{e\})- \chi_Y(\{e\})  \,:\, e\in E(\pmb{z},\pmb{y})\} \cup \{\chi_Z(E_Z)\}$ 
 are linearly independent and represent the constructed  maximal set of  independent tight  constraints, 
 with the cardinality of  $|\mathcal{L}(\pmb{x})|+|E(\pmb{x},\pmb{z})|+|E(\pmb{z},\pmb{y})|
 +|\mathcal{L}(\pmb{y})|+1$, 
 that uniquely describe  $(\pmb{x},\pmb{z},\pmb{y})$.
 Hence  $|E_X|+|E_Z|+|E_Y|=|\mathcal{L}(\pmb{x})|+|E(\pmb{x},\pmb{z})|+|E(\pmb{z},\pmb{y})|
 +|\mathcal{L}(\pmb{y})|+1$ which establishes the lemma.
 \end{proof}
 \begin{lem} 
Let $(\pmb{x},\pmb{z},\pmb{y})$ be a vertex solution of 
$LP_{RRST}(E_{X},V_{X},E_{Y},V_{Y}, E_{Z})$  such that 
$x_e>0$, $e\in E_X$, $y_e>0$, $e\in E_Y$ and
 $z_e>0$, $e\in E_Z$. Then there is an edge $e'\in E_X$ with $x_{e'}=1$ or
 an edge $e''\in E_Y$ with $y_{e''}=1$.
 \label{lemx1y1}
\end{lem}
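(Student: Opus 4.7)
The plan is proof by contradiction. Suppose $x_e<1$ for every $e\in E_X$ and $y_e<1$ for every $e\in E_Y$; combined with the strict positivity in the hypothesis, all $x_e$ and $y_e$ then lie in the open interval $(0,1)$, and the goal is to contradict the identity (i) of Lemma~\ref{ranklem}.

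First I would assemble two easy counting ingredients. From $\sum_{e\in E_X}x_e=|V_X|-1$ together with $x_e<1$ strictly, it follows that $|E_X|\ge |V_X|$, and likewise $|E_Y|\ge |V_Y|$. Second, any singleton $U$ appearing in $\mathcal{L}(\pmb{x})$ contributes the zero vector $\chi_X(E_X(U))=\pmb{0}$ and can be dropped without changing the span, so the standard bound on laminar families of two-or-more-element sets yields $|\mathcal{L}(\pmb{x})|\le |V_X|-1\le |E_X|-1$, and analogously $|\mathcal{L}(\pmb{y})|\le |E_Y|-1$.

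Substituting both inequalities into the dimension identity of Lemma~\ref{ranklem}(i) gives the lower bound
\[
|E(\pmb{x},\pmb{z})|+|E(\pmb{z},\pmb{y})| \;\ge\; |E_Z|+1.
\]
The crux of the proof, and the main obstacle, is to establish the matching upper bound $|E(\pmb{x},\pmb{z})|+|E(\pmb{z},\pmb{y})|\le |E_Z|$, which would produce the desired contradiction. For this, one must exploit (ii) of Lemma~\ref{ranklem}. The delicate point is that an edge $e\in E_X\cap E_Y\cap E_Z$ may lie in both $E(\pmb{x},\pmb{z})$ and $E(\pmb{z},\pmb{y})$, contributing the two vectors $-\chi_X(\{e\})+\chi_Z(\{e\})$ and $\chi_Z(\{e\})-\chi_Y(\{e\})$; their $Z$-projections coincide, so their linear independence is carried entirely by the $X$-$Y$ difference $-\chi_X(\{e\})+\chi_Y(\{e\})$. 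My plan is to project the linearly independent collection in (ii) onto the $Z$-coordinate subspace to show that $|E(\pmb{x},\pmb{z})\cup E(\pmb{z},\pmb{y})|\ge |E_Z|-1$, and then to count the $X$-$Y$ difference vectors induced by $E(\pmb{x},\pmb{z})\cap E(\pmb{z},\pmb{y})$ inside the $(|E_X|+|E_Y|)$-dimensional $X$-$Y$ subspace, where they compete for independence with the $|\mathcal{L}(\pmb{x})|+|\mathcal{L}(\pmb{y})|$ spanning-tree vectors. A careful accounting of this trade-off, combined with the laminar bounds above, should yield the required $|E(\pmb{x},\pmb{z})|+|E(\pmb{z},\pmb{y})|\le |E_Z|$ and thus finish the proof.
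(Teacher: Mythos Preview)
Your lower-bound half is correct and rather clean: since the vectors in Lemma~\ref{ranklem}(ii) are linearly independent, no singleton can lie in $\mathcal{L}(\pmb{x})$ or $\mathcal{L}(\pmb{y})$ (it would contribute the zero vector), so the laminar bound gives $|\mathcal{L}(\pmb{x})|\le |V_X|-1$ and, using $\sum_{e\in E_X}x_e=|V_X|-1$ with every $x_e<1$, also $|V_X|\le |E_X|$; likewise for $Y$. Plugging into (i) indeed yields $|E(\pmb{x},\pmb{z})|+|E(\pmb{z},\pmb{y})|\ge |E_Z|+1$.

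The gap is the matching upper bound $|E(\pmb{x},\pmb{z})|+|E(\pmb{z},\pmb{y})|\le |E_Z|$, which you only sketch. The plan of projecting onto the $Z$-block and then counting the $X$--$Y$ difference vectors against the laminar vectors cannot succeed from condition~(ii) alone. By (i) the collection in (ii) is a \emph{basis} of $\mathbb{R}^{|E_X|+|E_Z|+|E_Y|}$, so every dimension count obtainable from projections and kernels collapses back to the identity in (i). Concretely: the $Z$-projection of the basis is surjective, and the kernel of the $Z$-projection restricted to the span of the coupling vectors together with the laminar vectors has dimension exactly $|E_X|+|E_Y|$; writing this out gives
\[
|\mathcal{L}(\pmb{x})|+|\mathcal{L}(\pmb{y})|+\bigl(|E(\pmb{x},\pmb{z})|+|E(\pmb{z},\pmb{y})|+1-|E_Z|\bigr)\le |E_X|+|E_Y|,
\]
which is just (i) again as an inequality. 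No amount of ``careful accounting'' of (ii) by itself will produce the missing strict inequality.

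What is missing is precisely the integrality of $L$ (and of $|V_X|-1$, $|V_Y|-1$) combined with $0<z_e<1$; note that your entire argument for the upper bound never invokes these. The paper's proof supplies exactly this ingredient through a token-counting case analysis (its Claim~2): from $\sum_{e\in E_Z}z_e=L$ and the coupling equalities one derives sums such as $\sum_{e\in E(\pmb{z},\pmb{y})}y_e=L$ or $\sum_{e\in E(\pmb{x},\pmb{z})}x_e=L$, and the integrality of $L$ together with $0<x_e,y_e,z_e<1$ forces at least two edges to lie outside the relevant tight set, producing the extra slack needed to contradict (i). To repair your approach you must bring this integrality argument into the $E_Z$ part; without it the purported upper bound is not established.
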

\begin{proof}
On  the contrary, suppose that $0<x_e<1$ for every $e\in E_X$  and $0<y_e<1$ for every $e\in E_Y$.
Constraints~(\ref{lpst_2}) and  (\ref{lpst_4})  lead to $0<z_e<1$ for every $e\in E_Z$.
By Lemma~\ref{ranklem} there exist laminar families $\mathcal{L}(\pmb{x})\not=\emptyset$
 and $\mathcal{L}(\pmb{y})\not=\emptyset$ and subsets 
 $E(\pmb{x},\pmb{z}) \subseteq \mathcal{E}(\pmb{x},\pmb{z})$ and
 $E(\pmb{z},\pmb{y}) \subseteq \mathcal{E}(\pmb{z},\pmb{y})$ indicating  linearly independent 
 constraints which uniquely define  $(\pmb{x},\pmb{z},\pmb{y})$, namely
 \begin{eqnarray}
                        \sum_{e\in E_X(U)} x_e =  |U|-1, && \forall U\in \mathcal{L}(\pmb{x}), \label{dlpst_1}\\
                         -x_e+z_e = 0,                              &&\forall e\in E(\pmb{x},\pmb{z}),  \label{dlpst_2}\\
                      \sum_{e\in E_Z} z_e= L, &&  \label{dlpst_3}\\
                       z_e -y_e = 0,                              &&\forall e\in E(\pmb{z},\pmb{y}),  \label{dlpst_4}\\
                      \sum_{e\in E_Y(U)} y_e =  |U|-1, && \forall U \in \mathcal{L}(\pmb{y}).  \label{dlpst_6}
\end{eqnarray}
We will arrive to a contradiction with Lemma~\ref{ranklem}(i)
 by applying a token counting argument frequently in used in~\cite{LRS11}.
  
We give exactly two tokens to each edge in  $E_X$, $E_Z$ and $E_Y$.
Thus we use $2|E_X|+2|E_Z|+2|E_Y|$ tokens.
 We then redistribute these tokens to the tight constraints (\ref{dlpst_1})-(\ref{dlpst_6}) as follows.
For $e\in E_X$ the first token is assigned to the constraint
indicated by the smallest set $U\in \mathcal{L}(\pmb{x})$ containing its two endpoints  (see~(\ref{dlpst_1}))
and the second one is assigned to the constraint represented by~$e$ (see~(\ref{dlpst_2}))
if  $e\in E(\pmb{x},\pmb{z})$.
Similarly,
for $e\in E_Y$ the first token is assigned to the constraint
indicated by the smallest set $U\in \mathcal{L}(\pmb{y})$ containing its both endpoints  (see~(\ref{dlpst_6}))
and the second one is assigned to the constraint represented by~$e$ (see~(\ref{dlpst_4})) if
$e\in E(\pmb{z},\pmb{y})$.
Each $e\in E_Z$ assigns the first token to the constraint corresponding to~$e$ (see~(\ref{dlpst_2}))
if  $e\in E(\pmb{x},\pmb{z})$; otherwise to  the constraint (\ref{dlpst_3}).
The second token is assigned to the constraint indicated by~$e$ (see~(\ref{dlpst_4}))
if  $e\in E(\pmb{z},\pmb{y})$.
\begin{cla}
Each of the constraints (\ref{dlpst_2}) and  (\ref{dlpst_4}) receives exactly two tokens.
Each of the constraints (\ref{dlpst_1}) and  (\ref{dlpst_6}) collects at least two tokens.
\label{dcla1}
\end{cla}
The first part of Claim~\ref{dcla1} is obvious. In order to show 
the second part
we apply the same reasoning 
as~\cite[Proof~2 of Lemma~4.2.1]{LRS11}.
Consider the constraint represented by any subset $U\in \mathcal{L}(\pmb{x})$.
We say that $U$ is the \emph{parent} of a subset $C\in  \mathcal{L}(\pmb{x})$
and $C$ is the \emph{child} of~$U$ if $U$ is  the smallest set containing~$C$.
Let $C_1,\ldots,C_{\ell}$ be the children of~$U$.
The constraints corresponding to these subsets are as follows
\begin{align}
 \sum_{e\in E_X(U)} x_e& =  |U|-1, \label{pc}\\
  \sum_{e\in E_X(C_k)} x_e &=  |C_k|-1, \; \forall k\in[\ell]. \label{cc}
\end{align} 
Subtracting (\ref{cc}) for every $k\in [\ell]$  from~(\ref{pc}) 
yields: 
\begin{equation}
\sum_{e\in  E_X(U)\setminus \bigcup_{k\in [\ell]}E_X(C_k)} x_e=|U|-\sum_{k\in [\ell]}|C_k| +\ell-1.
\label{ttc}
\end{equation}
Observe that $E_X(U)\setminus \bigcup_{k\in [\ell]}E_X(C_k)\not=\emptyset$.
Otherwise, this leads to a contradiction with the linear independence of the constraints.
Since the right hand side of (\ref{ttc}) is integer and $0<x_e<1$ for every $e\in E_X$,
$|E_X(U)\setminus \bigcup_{k\in [\ell]}E_X(C_k)|\geq 2$.
Hence $U$ receives at least two tokens.
The same arguments apply to the constraint represented by any subset $U\in \mathcal{L}(\pmb{y})$.
This proves the claim.
\begin{cla}
Either constraint~(\ref{dlpst_3}) collects at least one token and there are at least two extra tokens
left or
constraint~(\ref{dlpst_3})  receives no token and 
there are at least three extra tokens
left.
\label{dcla2}
\end{cla}
To prove the claim we need to consider several nested cases:
\begin{enumerate}
\item Case: $E_Z\setminus E(\pmb{x},\pmb{z})\not= \emptyset$.
Since $E_Z\setminus E(\pmb{x},\pmb{z})\not= \emptyset$,
at least one token is assigned to  constraint~(\ref{dlpst_3}).
We have yet to show that there are at least two token left.
\begin{enumerate}
\item Case: $E_Z\setminus E(\pmb{z},\pmb{y})=\emptyset$.
Subtracting (\ref{dlpst_4}) for every $e\in E(\pmb{z},\pmb{y})$ from~(\ref{dlpst_3})
gives:
\begin{equation}
\sum_{e\in E(\pmb{z},\pmb{y})} y_e= L.
\label{c2c1}
\end{equation}
\label{c1a}
\begin{enumerate}
\item Case: $E_Y\setminus E(\pmb{z},\pmb{y})=\emptyset$.
Thus $L=|V_Y|-1$, since $(\pmb{x},\pmb{z},\pmb{y})$ is a feasible solution.
By Observation~\ref{lamob}, $V_Y\in \mathcal{L}(\pmb{y})$ and (\ref{c2c1}) has the form of constraint~(\ref{dlpst_6})
for $V_Y$, which contradicts the linear independence of the constraints.
\item Case: $E_Y\setminus E(\pmb{z},\pmb{y})\not=\emptyset$.
Thus $L<|V_Y|-1$.
Since the right hand side of (\ref{c2c1}) is integer and $0<y_e<1$ for every $e\in E_Y$,
$|E_Y\setminus E(\pmb{z},\pmb{y})|\geq 2$.
Hence, there are  at least two extra  tokens left.
\end{enumerate}
\item Case: $E_Z\setminus E(\pmb{z},\pmb{y})\not=\emptyset$.
Consequently, 
$|E_Z\setminus E(\pmb{z},\pmb{y})|\geq 1$ and thus
at least one  token left over, i.e at least one token is not assigned to constraints~(\ref{dlpst_4}).
Therefore,
yet one additional token is required.
\label{c1b}
\begin{enumerate}
\item Case: $E_Y\setminus E(\pmb{z},\pmb{y})=\emptyset$.
Consider the constraint~(\ref{dlpst_6}) corresponding to~$V_Y$.
Adding (\ref{dlpst_4}) for every $e\in E(\pmb{z},\pmb{y})$ to this constraint
yields:
\begin{equation}
\sum_{e\in E(\pmb{z},\pmb{y})} z_e= |V_Y|-1.
\label{c2c2}
\end{equation}
Obviously $|V_Y|-1<L$.
Since $L$ is integer and $0<z_e<1$ for every $e\in E_Z$,
$|E_Z\setminus E(\pmb{z},\pmb{y})|\geq 2$.
Hence there are  at least two extra  tokens left.
\item Case: $E_Y\setminus E(\pmb{z},\pmb{y})\not=\emptyset$.
One sees immediately that 
 at least one  token left over, i.e at least one token  is not assigned to constraints~(\ref{dlpst_4}).
\end{enumerate}
\end{enumerate}
Summarizing the above cases, constraint~(\ref{dlpst_3}) collects at least one token and there are at least two extra tokens
left.
\item Case: $E_Z\setminus E(\pmb{x},\pmb{z})= \emptyset$.
Subtracting (\ref{dlpst_2}) for every $e\in E(\pmb{x},\pmb{y})$ from~(\ref{dlpst_3})
gives:
\begin{equation}
\sum_{e\in E(\pmb{x},\pmb{z})} x_e= L.
\label{c2c3}
\end{equation}
Thus constraint~(\ref{dlpst_3}) receives no token.
We need yet to show that there are at least three extra tokens left.
\begin{enumerate}
\item  Case: $E_X\setminus E(\pmb{x},\pmb{z})= \emptyset$.
Therefore $L=|V_X|-1$, since $(\pmb{x},\pmb{z},\pmb{y})$ is a feasible solution.
By Observation~\ref{lamob}, $V_X\in \mathcal{L}(\pmb{x})$ and (\ref{c2c3}) has the form of constraint~(\ref{dlpst_1})
for $V_X$, which contradicts with the linear independence of the constraints.
\item Case: $E_X\setminus E(\pmb{x},\pmb{z})\not= \emptyset$
Thus $L<|V_X|-1$.
Since the right hand side of (\ref{c2c3}) is integer and $0<x_e<1$ for every $e\in E_X$,
$|E_X\setminus E(\pmb{x},\pmb{z})|\geq 2$. Consequently,
there are  at least two extra  tokens left.
Yet at least one token is required.
\begin{enumerate}
\item Case: $E_Z\setminus E(\pmb{z},\pmb{y})= \emptyset$. 
Reasoning  is the same as in Case~\ref{c1a}.
\item Case: $E_Z\setminus E(\pmb{z},\pmb{y})\not= \emptyset$.
Reasoning  is the same as in  Case~\ref{c1b}.
\end{enumerate}
\end{enumerate}
Accordingly, constraint~(\ref{dlpst_3})  receives no token and 
there are at least three extra tokens
left.
\end{enumerate}
Thus the claim is proved.
The method of assigning tokens to constraints   (\ref{dlpst_1})-(\ref{dlpst_6}) and
Claims~\ref{dcla1} and \ref{dcla2}    now show that
either
\[
2|E_X|+2|E_Z|+2|E_Y|-2\geq2|\mathcal{L}(\pmb{x})|+2|E(\pmb{x},\pmb{z})|+2|E(\pmb{z},\pmb{y})|
+2|\mathcal{L}(\pmb{y})|
+1
\]
or
\[
2|E_X|+2|E_Z|+2|E_Y|-3\geq2|\mathcal{L}(\pmb{x})|+2|E(\pmb{x},\pmb{z})|+2|E(\pmb{z},\pmb{y})| 
+2|\mathcal{L}(\pmb{y})|.
\]
The above inequalities lead to
$|E_X|+|E_Z|+|E_Y|>|\mathcal{L}(\pmb{x})|+|E(\pmb{x},\pmb{z})|+|E(\pmb{z},\pmb{y})|
 +|\mathcal{L}(\pmb{y})|+1$. This contradicts Lemma~\ref{ranklem}(i).
\end{proof}

It remains to verify two cases: $E_X=\emptyset$ and $|V_X|=1$;  $E_Y=\emptyset$ and $|V_Y|=1$.
We consider only for the first one, the second case is symmetrical. 
Then
  the constraints (\ref{lpst_3}), (\ref{lpst_4}) and the inclusion 
$E_{Z}\subseteq E_{Y}$ yield 
\begin{equation}
\sum_{e\in E_Z} y_e\geq L. \label{um}
\end{equation}
 \begin{lem}
Let $\pmb{y}$ be a vertex solution of  linear program: 
(\ref{lpst_o}),  (\ref{lpst_5}),  (\ref{lpst_6}),  (\ref{lpst_9}) and  (\ref{um}) such that
 $y_e>0$, $e\in E_Y$. Then 
there is an edge $e'\in E_Y$ with $y_{e'}=1$.
Moreover, 
using 
$\pmb{y}$ one can  construct a vertex solution of
 $LP_{RRST}(\emptyset,V_{X},E_{Y},V_{Y}, E_{Z})$ with $y_{e'}=1$ and  the cost of~$\pmb{y}$.
 \label{lemy1}
\end{lem}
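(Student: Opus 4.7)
The plan is to reduce the lemma to standard facts about the spanning-tree polytope $P_Y$ of $G_Y=(V_Y,E_Y)$ defined by (\ref{lpst_5}), (\ref{lpst_6}), (\ref{lpst_9}); since $E_X=\emptyset$, the LP in the statement is the optimisation of a linear objective over $P_Y$ intersected with the single half-space (\ref{um}). I would first dispatch the easy case in which (\ref{um}) is either not tight at $\pmb y$ or tight but linearly dependent on the active tree constraints: then $\pmb y$ is a vertex of the integral polytope $P_Y$, so $y_e\in\{0,1\}$ for every $e\in E_Y$, and the positivity assumption $y_e>0$ promotes this to $y_e=1$ throughout, so any edge of $E_Y$ can serve as $e'$.

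The substantive case is (\ref{um}) tight and linearly independent of the tight tree constraints, so that $\pmb y$ lies in $P_Y\cap H$ for the hyperplane $H=\{y:\sum_{e\in E_Z}y_e=L\}$. I would invoke the face-intersection principle: a vertex of $P_Y\cap H$ either is a vertex of $P_Y$ on $H$ (already handled) or lies in the relative interior of a one-dimensional edge of $P_Y$. Such an edge connects two spanning trees $T_1,T_2$ adjacent under the matroid base-exchange relation, so $|T_1\cap T_2|=|V_Y|-2$ and $\pmb y=\lambda\chi_{T_1}+(1-\lambda)\chi_{T_2}$ yields $y_e=1$ for every $e\in T_1\cap T_2$. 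Whenever $|V_Y|\ge 3$ this supplies the required $e'$; the main obstacle I anticipate is eliminating the degenerate subcase $|V_Y|=2$, where $T_1\cap T_2$ is empty. I would rule this out by direct inspection: a proper non-empty $E_Z\subsetneq E_Y$ would force $y_e=L$ for a single $e\in E_Y$, contradicting integrality of $L$ together with $0<y_e<1$, while $E_Z=E_Y$ would make (\ref{um}) a restatement of the tree equality $\sum_{e\in E_Y}y_e=|V_Y|-1$, contradicting independence. A parallel token-counting proof in the style of Lemma~\ref{lemx1y1}, with two tokens per edge, a single laminar family $\mathcal L(\pmb y)$ and the extra constraint (\ref{um}), is also available and reaches the same contradiction through the bound $|E_Y|\le 2$ plus the same small-case check.

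For the ``moreover'' part, I would lift $\pmb y$ to a vertex $(\pmb y,\pmb z)$ of $LP_{RRST}(\emptyset,V_X,E_Y,V_Y,E_Z)$ by setting $z_e:=y_e$ for each $e\in E_Z$ when (\ref{um}) is tight, or by taking any $0/1$ vector $\pmb z$ on $E_Z$ with exactly $L$ ones (which exists because $|E_Z|\ge L$ and $y_e=1$ throughout) when it is not. Feasibility is immediate, and the objective value is preserved because $z$ does not appear in (\ref{lpst_o}). Vertexness follows by a rank count: the $|\mathcal L(\pmb y)|$ tight laminar tree constraints on $y$, together with the $|E_Z|$ tight coupling equalities $z_e=y_e$ (respectively the tight $z_e=0$ or $z_e=y_e$ bounds in the integer subcase) and the equation $\sum_{e\in E_Z}z_e=L$, furnish $|E_Y|+|E_Z|$ linearly independent tight constraints, so $(\pmb y,\pmb z)$ is indeed a vertex, and the edge $e'$ from the first part still satisfies $y_{e'}=1$ in this lifted vertex.
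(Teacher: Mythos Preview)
Your proof is correct but takes a different route from the paper. For the first part, the paper runs a token-counting argument in the style of Lemma~\ref{lemx1y1}: two tokens per edge of $E_Y$, the first to the smallest set in the laminar family $\mathcal L(\pmb y)$ containing its endpoints and the second to constraint~(\ref{um}); after checking $E_Y\setminus E_Z\neq\emptyset$ (else (\ref{um}) duplicates the tree equality for $V_Y$) one obtains $2|E_Y|-1\ge 2|\mathcal L(\pmb y)|+2$, contradicting $|E_Y|=|\mathcal L(\pmb y)|+1$. You instead invoke the polyhedral fact that a vertex of $P_Y\cap H^+$ not already a vertex of $P_Y$ must lie on an edge of $P_Y$, together with the adjacency characterisation of the spanning-tree (matroid base) polytope, which immediately yields $|V_Y|-2$ coordinates equal to~$1$. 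Your approach is shorter and more conceptual but imports a nontrivial structural result about base-polytope edges; the paper's argument is self-contained and parallels the proof of Lemma~\ref{lemx1y1}. Your parenthetical sketch of a token argument (``the bound $|E_Y|\le 2$'') is garbled and does not match the actual counting, but since your main argument is the geometric one this does not affect correctness. For the ``moreover'' clause you give a more explicit lifting and rank count than the paper, which simply asserts that one can build $\pmb z$ from $\pmb y$; your construction (set $z_e=y_e$ when (\ref{um}) is tight, or pick a $0/1$ vector with $L$ ones otherwise) and the accompanying independence check are fine.
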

\begin{proof}
 Similarly as in the proof Lemma~\ref{ranklem} 
 we construct  a maximal subset of 
  linearly independent tight constraints that characterize~ $\pmb{y}$ and
  get:  $|E_Y|=|\mathcal{L}(\pmb{y})|$ if  (\ref{um}) is not tight or  adding (\ref{um}) makes the subset dependent;
         $|E_Y|=|\mathcal{L}(\pmb{y})|+1$ otherwise. In the first case the spanning tree constraints define~$\pmb{y}$
         and, in consequence, $\pmb{y}$ is integral (see~\cite[Theorem~3.2]{TLMLAW95}).
 Consider the second case and assume, on the contrary,
    that $0<y_e<1$ for each $e\in E_Y$. Thus
  \begin{eqnarray}
                      \sum_{e\in E_Z} y_e= L, &&  \label{dlpst_3l2}\\
                      \sum_{e\in E_Y(U)} y_e =  |U|-1, && \forall U \in \mathcal{L}(\pmb{y}).  \label{dlpst_6l2}
\end{eqnarray}         
We assign two tokens   to each edge in  $E_Y$ and  redistribute $2|E_Y|$ tokens to constraints (\ref{dlpst_3l2})
and (\ref{dlpst_6l2}) in the following way. 
The first token is given to the constraint
indicated by the smallest set $U\in \mathcal{L}(\pmb{y})$ containing its two endpoints and the second
one is  assigned to  (\ref{dlpst_3l2}). Since $0<y_e<1$ and $L$ is integer,
 similarly as in the proof Lemma~\ref{lemx1y1}, one can show that
 each of the constraints~(\ref{dlpst_6l2}) and (\ref{dlpst_3l2}) receives at least two tokens.
 If $E_Y\setminus E_Z= \emptyset$ then $L=|V_Y|-1$ since $\pmb{y}$ is a feasible solution - a contradiction
 with 
 the linear independence of the constraints. Otherwise ($E_Y\setminus E_Z\not= \emptyset$)  at least one token is left.
 Hence $2|E_Y|-1=2|\mathcal{L}(\pmb{y})|+2$ and so $|E_Y|>|\mathcal{L}(\pmb{y})|+1$, a contradiction.
 
 By (\ref{um})  and the fact that there are no variables $z_e$, $e\in E_Z$,  in the objective~(\ref{lpst_o}),
  it is obvious   that using~$\pmb{y}$ one can construct~$\pmb{z}$ satisfying~(\ref{lpst_3}) 
  and, in consequence, 
  a vertex solution of
 $LP_{RRST}(\emptyset,V_{X},E_{Y},V_{Y}, E_{Z})$ with $y_{e'}=1$ and  the cost of~$\pmb{y}$.
 \end{proof}
\begin{algorithm}
\begin{small}
$E_X\leftarrow E$, $E_Y\leftarrow E$, $E_Z\leftarrow E$,  $V_X\leftarrow V$,
$V_Y\leftarrow V$, $L\leftarrow |V|-1-k$, $X\leftarrow \emptyset$, $Y\leftarrow \emptyset$, $Z\leftarrow \emptyset$\;
\While{$|V_X|\geq 2$ \KwOR $|V_Y|\geq 2$}{
Find an optimal vertex solution $(\pmb{x}^*,\pmb{z}^*,\pmb{y}^*)$  of  $LP_{RRST}(E_{X},V_{X},E_{Y},V_{Y}, E_{Z},L)$\;
\label{steplp}
\ForEach{$e\in E_Z$ with $z^{*}_e=0$}{$E_Z\leftarrow E_Z\setminus \{e\}$\;}
\label{stepz0}
\ForEach{$e\in E_X$ with $x^{*}_e=0$}{$E_X\leftarrow E_X\setminus \{e\}$\;}
\label{stepx0}
\ForEach{$e\in E_Y$ with $y^{*}_e=0$}{$E_Y\leftarrow E_Y\setminus \{e\}$\;}
\label{stepy0}
\If{there exists edge $e'\in E_X$ with $x^{*}_{e'}=1$ \label{stepx1}}{
$X\leftarrow X\cup \{e'\}$  \; 
contract edge~$e'=\{u,v\}$ by deleting~$e$ and  identifying its endpoints~$u$ and $v$
 in graph $G_X=(V_X,E_X)$, induced by $V_X$ and $E_X$,
which is equivalent to: $|V_X|\leftarrow |V_X|-1$ and
$E_X\leftarrow E_X\setminus \{e'\}$\;  \label{stepxx1}
}
\If{there exists edge $e'\in E_X$ with $y^{*}_{e'}=1$\label{stepy1}}{
$Y\leftarrow Y\cup \{e'\}$\;
contract edge~$e'=\{u,v\}$ by deleting~$e$ and identifying its endpoints~$u$ and $v$
 in graph $G_Y=(V_Y,E_Y)$, induced by $V_Y$ and $E_Y$,
which is equivalent to: $|V_Y|\leftarrow |V_Y|-1$ and
$E_Y\leftarrow E_Y\setminus \{e'\}$\;\label{stepyy1}
}
\If{there exists edge $e'\in E_Z$ such that  $e'\in X\cap Y$ \label{stepzxy}}{
\tcc{Here $(\pmb{x}^*,\pmb{z}^*,\pmb{y}^*)$ with $\sum_{e\in E_Z}z^{*}_e=L$
can be always converted, preserving the cost, to 
$(\pmb{x}^*,\pmb{z}',\pmb{y}^*)$ with $z'_{e'}=1$ and $\sum_{e\in E_Z\setminus \{e'\}}z'_e=L-1$}
$Z\leftarrow Z\cup \{e'\}$, $L\leftarrow L-1$\;
$E_Z\leftarrow E_Z\setminus \{e'\}$\; \label{stepzxyz}
}
}
\Return{$X$, $Y$, $Z$}
  \caption{Algorithm for  RR ST}
 \label{algrrst}
\end{small} 
\end{algorithm}
 We are now ready to give the main result of this section.
\begin{thm}
Algorithm~\ref{algrrst} solves  RR~ST in polynomial time.
\end{thm}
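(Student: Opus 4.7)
The plan is to establish two statements: that the algorithm terminates in polynomial time and that the returned pair $(X,Y)$ is an optimal feasible solution to~(\ref{mpst}). I would argue the polynomial-time bound first and then correctness via a cost-preservation invariant on the LP.

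For polynomial runtime, observe that at every iteration of the while-loop, either $|V_X|$ or $|V_Y|$ strictly decreases. Indeed, as long as $E_X,E_Y,E_Z$ are all nonempty after the zero-valued edges are removed in lines~\ref{stepz0}--\ref{stepy0}, the optimal vertex solution $(\pmb{x}^*,\pmb{z}^*,\pmb{y}^*)$ satisfies the hypotheses of Lemma~\ref{lemx1y1}, which produces an edge with $x^*_{e'}=1$ or $y^*_{e'}=1$; a contraction in line~\ref{stepxx1} or~\ref{stepyy1} then drops $|V_X|$ or $|V_Y|$ by one. When one side becomes trivial, say $E_X=\emptyset$ and $|V_X|=1$, Lemma~\ref{lemy1} (and its symmetric analogue) provides the required $1$-valued edge in $E_Y$, and the corresponding contraction again decreases $|V_Y|$. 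Since $|V_X|+|V_Y|\leq 2n$ initially, there are $O(n)$ iterations; each iteration solves a polynomially-sized LP, using the separation oracle described after the LP is introduced, so the total running time is polynomial.

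For correctness, the plan is to maintain across iterations the invariant that the optimal value of the current LP $LP_{RRST}(E_X,V_X,E_Y,V_Y,E_Z,L)$, added to the already accumulated cost $\sum_{e\in X}C_e+\sum_{e\in Y}(c_e+d_e)$, equals the optimal value of the initial LP. Since the initial LP is a relaxation of~(\ref{mpst}), this quantity lower-bounds the optimum of~(\ref{mpst}); upon termination the residual LP is trivial and contributes zero, so the algorithm outputs a solution whose cost equals the initial LP value. Preservation of the invariant follows from three routine checks: removing edges with $x^*_e=0$, $y^*_e=0$, or $z^*_e=0$ neither changes feasibility nor cost; contracting an edge $e'$ with $x^*_{e'}=1$ in $G_X$ yields a contracted instance whose LP admits the restriction of $(\pmb{x}^*,\pmb{z}^*,\pmb{y}^*)$ as a feasible solution of cost diminished by exactly $C_{e'}$ (and symmetrically for $y^*_{e'}=1$, diminishing the cost by $c_{e'}+d_{e'}$); and when a common integral edge $e'\in X\cap Y\cap E_Z$ appears in line~\ref{stepzxy}, the $\pmb{z}^*$-vector can be rebalanced so that $z'_{e'}=1$ and $\sum_{e\in E_Z\setminus\{e'\}}z'_e=L-1$, which justifies decrementing $L$ and removing $e'$ from $E_Z$. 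Feasibility of the final output is then immediate from the contraction bookkeeping: $X$ and $Y$ are spanning trees of $G$ of total size $|V|-1$ each, and the counter tracking $Z$ ensures $|X\cap Y|\geq|Z|\geq|V|-1-k$, so $|X\setminus Y|\leq k$.

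The main obstacle is the rigorous verification of the cost-preservation invariant at the step where a common integral edge is moved to $Z$: one must show that the LP value after shrinking $E_Z$ by $e'$ and decrementing $L$ by one agrees with the LP value before. This relies on the fact that the variables $z_e$ are absent from the objective~(\ref{lpst_o}) and are coupled to $\pmb{x}$ and $\pmb{y}$ only through the box-type constraints~(\ref{lpst_2}) and~(\ref{lpst_4}) together with the sum constraint~(\ref{lpst_3}); since $x^*_{e'}=y^*_{e'}=1$, the slack in these constraints allows the substitution $z'_{e'}=1$ without violating feasibility, and conversely every feasible solution of the reduced LP lifts to a feasible solution of the previous LP of the same cost by restoring $z_{e'}=1$. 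Once this invariant, combined with Lemmas~\ref{lemx1y1} and~\ref{lemy1}, is in place, the theorem follows.
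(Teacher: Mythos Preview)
Your runtime argument and your cost-preservation invariant are essentially the paper's approach, and both are sound. The paper is terser on the cost side (it just says ``it is not difficult to show'' that the output cost is at most $\mathrm{OPT_{LP}}$), so your explicit invariant is a fine elaboration.

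However, your feasibility argument has a genuine gap. You write that ``the counter tracking $Z$ ensures $|X\cap Y|\ge|Z|\ge|V|-1-k$,'' but only the first inequality is immediate from bookkeeping. Since $L+|Z|=|V|-1-k$ is the invariant actually maintained, you need $L=0$ at termination, and nothing in your argument forces this. The cost invariant does not help: even if $L\ge 1$ after the last contraction, the residual LP (with $|V_X|=|V_Y|=1$, hence $E_X=E_Y=\emptyset$ after cleanup) can still be feasible with value~$0$ as long as $E_Z$ retains enough edges whose coupling constraints~(\ref{lpst_2}),~(\ref{lpst_4}) have been dropped; the $z$-variables are not in the objective. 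So optimality of the output does not by itself certify $|X\cap Y|\ge |V|-1-k$.

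The paper closes this gap with a short contradiction argument you are missing: suppose $L\ge 1$ at termination. Then $E_Z\neq\emptyset$; pick $e'\in E_Z$. Throughout the run $z^*_{e'}>0$ (else $e'$ would have been deleted from $E_Z$). If both coupling constraints for $e'$ have been removed, then at some iterations $x^*_{e'}=1$ and $y^*_{e'}=1$ were detected, so $e'\in X\cap Y$ and Steps~\ref{stepzxy}--\ref{stepzxyz} would have moved $e'$ to $Z$ and out of $E_Z$, a contradiction. Hence at least one of $-x_{e'}+z_{e'}\le 0$ or $z_{e'}-y_{e'}\le 0$ survives, forcing $0<x^*_{e'}<1$ or $0<y^*_{e'}<1$, so $e'\in E_X$ or $e'\in E_Y$; but then $|V_X|\ge 2$ or $|V_Y|\ge 2$, contradicting termination. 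You should insert this (or an equivalent) argument before concluding $|X\setminus Y|\le k$.
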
 
\begin{proof}
Lemmas~\ref{lemx1y1} and~\ref{lemy1} and the case when $E_Z=\emptyset$ (see the comments in this section)
ensure that Algorithm~\ref{algrrst} terminates after performing $O(|V|)$ iterations (Steps~\ref{steplp}-\ref{stepzxyz}).
Let  $\mathrm{OPT_{LP}}$  denote the optimal objective function value of 
$LP_{RRST}(E_X,V_X,E_Y,V_Y, E_Z,L)$,
where $E_{X}= E_{Z}= E_{Y}=E$, $V_{X}=V_{Y}=V$, $L=|V|-1-k$. Hence $\mathrm{OPT_{LP}}$ is a lower bound on the optimal objective value of RR~ST. 
It is not difficult to show that after the termination of the algorithm $X$ and $Y$ are two spanning trees in $G$ such that $\sum_{e\in X} C_e + \sum_{e\in Y} \overline{c}_e\leq \mathrm{OPT_{LP}}$. It remains to show that $|X\cap Y|\geq |V|-k-1$. By induction on the number of iterations of Algorithm~\ref{algrrst}
one can easily show that at any iteration
the inequality $L+|Z|= |V|-1-k$ is satisfied.
Accordingly,
if, after the termination of the algorithm, it holds $L=0$, then we are done. Suppose, on the  contrary that $L\geq 1$ ($L$ is integer). Since $\sum_{e\in E_Z}z^{*}_e\geq L$, $|E_Z|\geq L\geq 1$ and
$E_Z$ is the set with edges not belonging to $ X\cap Y$.
Consider any $e'\in E_Z$. Of course $z^{*}_{e'}>0$ and it remained positive during
the course of Algorithm~\ref{algrrst}.
Moreover, at least one of the constraints $-x_{e'} +z_{e'}\leq 0$ or  $z_{e'} -y_{e'}\leq 0$ is still present in the linear program (\ref{lpst_o})-(\ref{lpst_9}).
Otherwise, since $z^{*}_{e'}>0$, Steps~\ref{stepx1}-\ref{stepxx1},  Steps~\ref{stepy1}-\ref{stepyy1} and,
in consequence, Steps~\ref{stepzxy}-\ref{stepzxyz} for~$e'$ must have been executed 
during
the course of Algorithm~\ref{algrrst} and 
$e'$ has been included to~$Z$, a contradiction with the fact $e'\not\in X\cap Y$.
Since
the above constraints are present, $0<x^{*}_{e'}<1$ or $0<y^{*}_{e'}<1$. Thus
$e'\in E_X$ or  $e'\in E_Y$, which contradicts the termination of Algorithm~\ref{algrrst}.
\end{proof}
 
\section{Robust recoverable   matroid basis problem}
\label{srrmb}
The minimum spanning tree
can be generalized to the following  \emph{minimum  matroid basis  problem}.
We are given a matroid $M=(E,\mathcal{I})$ (see~\cite{o92}), where  $E$ is a nonempty ground set, $|E|=m$,
 and 
 $\mathcal{I}$ is a family of subsets of $E$, called \emph{independent sets}. The following two axioms must be satisfied: (i) if $A\subseteq B$ and $B\in \mathcal{I}$,
then $A\in \mathcal{I}$; (ii) for all $A,B\in \mathcal{I}$ if $|A|< |B|$, then there is an element $e \in B \setminus A$
such that $A \cup \{e\} \in \mathcal{I}$. 
We make the assumption that checking the independence of a set $A \subseteq E$ can be done in polynomial time.  
The \emph{rank function} of~$M$, $r_M: 2^E\rightarrow \Zset_{\geq 0}$, 
is defined by $r_M(U)=\max_{W \subseteq U, W \in \mathcal{I}}|W|$.
A \emph{basis} of $M$ is a maximal  under inclusion element of $\mathcal{I}$. The cardinality of each basis equals $r_M(E)$.
Let $c_e$ be a cost specified for each element $e\in E$. We wish to find a basis of~$M$  of the minimum total cost, $\sum_{e\in M}c_e$. It is well known that the minimum matroid basis problem is polynomially solvable by a greedy algorithm (see.~\cite{E71}). A spanning tree is a basis of a \emph{graphic matroid}, in which $E$ is a set of edges of a given graph and $\mathcal{I}$ is the set of all forests in $G$.

We now define two operations on matroid $M=(E,\mathcal{I})$, used in the following (see also~\cite{o92}).
Let $M\backslash e=(E_{M\backslash  e},\mathcal{I}_{M \backslash e})$, the \emph{deletion~$e$ from~$M$},
be the matroid obtained by deleting~$e\in E$  from~$M$ defined by
$E_{M\backslash e}=E\setminus\{e\}$ and 
$\mathcal{I}_{M\backslash e}=\{U\subseteq E\setminus\{e\}\,:\, U\in \mathcal{I}\}$.
The rank function of $M\backslash e$ is given by 
$r_{M\backslash e}(U)=r_{M}(U)$ for all $U\subseteq E\setminus\{e\}$.
Let  $M/e=(E_{M/e},\mathcal{I}_{M/e})$, the \emph{contraction~$e$ from~$M$},
be the matroid obtained by contracting~$e \in E$  in~$M$, defined by
$E_{M/e}=E\setminus\{e\}$; and 
$\mathcal{I}_{M/e}=\{U\subseteq E\setminus\{e\}\,:\, U\cup \{e\}\in \mathcal{I}\}$
if $\{e\}$ is independent and  $\mathcal{I}_{M/e}=\mathcal{I}$, otherwise.
The rank function of $M/e$ is given by 
$r_{M/e}(U)=r_{M}(U)-r_{M}(\{e\})$ for all $U\subseteq E\setminus\{e\}$.

Assume now that the first stage cost of element $e\in E$ equals $C_e$ and
its second stage cost
 is  uncertain and is modeled by interval $[c_e, c_e+d_e]$.
The \emph{robust recoverable   matroid basis problem} (RR MB for short) under 
scenario set $\mathcal{U}^m$
can be stated similarly to RR ST. Indeed, it suffices
to replace  the set of spanning trees by the bases of~$M$ and $\mathcal{U}$
by $\mathcal{U}^m$
in the formulation~(\ref{rrp}) and, in consequence, in~(\ref{rrst}). 
Here and subsequently, $\Phi$ denotes the set of all bases of~$M$. 
Likewise, RR MB under $\mathcal{U}^m$
is equivalent to the following problem:
\begin{equation}
 \begin{array}{ll}
 \min &\sum_{e\in X}C_e + \sum_{e\in Y}(c_e+d_e)\\
\text{s.t. } & |X \cap Y| \geq r_{M}(E)-k, \\
       & X,Y\in \Phi.
 \end{array}
 \label{mpmb}
\end{equation}

Let $E_{X}, E_{Y}\subseteq E$ and $\mathcal{I}_{X}$, $\mathcal{I}_{Y}$ be collections of subsets of $E_{X}$ and
$E_{Y}$, respectively, (independent sets),
that induce matroids ${M}_X=(E_X, \mathcal{I}_X)$ and ${M}_Y=(E_Y, \mathcal{I}_Y)$.
 Let $E_Z$ be a subset of $E$ such that $E_Z \subseteq E_X \cup E_Z$ and $|E_Z|\geq L$ for some fixed $L$. 
 The following linear program, denoted by $LP_{RRMB}(E_{X},\mathcal{I}_{X},E_{Y},\mathcal{I}_{Y},E_{Z},L)$,
 after setting $E_X=E_Y=E_Z=E$, $\mathcal{I}_X=\mathcal{I}_Y=\mathcal{I}$ and
 $L=r_{M}(E)-k$
 is a relaxation of~(\ref{mpmb}):
\begin{eqnarray}
    \min       \sum_{e\in E_X}C_e x_e + \sum_{e\in E_{Y}}(c_e+d_e)y_e&& \label{lpmb_o}\\
    \text{s.t. } \sum_{e\in E_X} x_e = r_{M_{X}}(E_X), && \label{lpmb_0} \\
                      \sum_{e\in U} x_e \leq  r_{M_{X}}(U), && \forall U \subset E_X, \label{lpmb_1}\\
                      -x_e+z_e \leq  0,                              &&\forall e\in E_{X}\cap E_{Z},  \label{lpmb_2}\\
                      \sum_{e\in E_Z} z_e= L, &&  \label{lpmb_3}\\
                       z_e -y_e \leq  0,                              &&\forall e\in E_{Y}\cap E_{Z},  \label{lpmb_4}\\
                       \sum_{e\in E_Y} y_e = r_{M_{Y}}(E_Y), &&   \label{lpmb_5}\\
                      \sum_{e\in U} y_e \leq  r_{M_{Y}}(U), && \forall U\subset E_Y,  \label{lpmb_6}\\
                      x_e\geq 0, && \forall e\in E_{X},  \label{lpmb_7}\\
                      z_e\geq 0, && \forall e\in E_{Z},  \label{lpmb_8}\\
                      y_e\geq 0, && \forall e\in E_{Y}  \label{lpmbb_9}.
\end{eqnarray}
The indicator variables $x_e, y_e, z_e\in \{0,1\}$, $e\in E$, 
 describe the bases $X$, $Y$ and their intersection $X\cap Y$, respectively have been relaxed.
Since there are no variables~$z_e$ in the objective~(\ref{lpmb_o}),
we can use equality constraint (\ref{lpmb_3}), instead of the inequality one.
The above linear program is solvable in polynomial time. 
The rank constraints (\ref{lpmb_0}), (\ref{lpmb_1}) and (\ref{lpmb_5}), (\ref{lpmb_6}) 
relate to matroids ${M}_X=(E_X, \mathcal{I}_X)$ and ${M}_Y=(E_Y, \mathcal{I}_Y)$, respectively,
and  a separation over these constraints can be carried out  in polynomial time~\cite{C84}.
Obviously, a separation over (\ref{lpmb_2}),  (\ref{lpmb_3}) and  (\ref{lpmb_4}) can be
done in polynomial time as well.

Consider 
a vertex solution $(\pmb{x},\pmb{z},\pmb{y})\in \Rset^{|E_X|\times |E_Z|\times |E_Y|}_{\geq 0}$ 
of $LP_{RRMB}(E_{X},\mathcal{I}_{X},E_{Y},\mathcal{I}_{Y}, E_{Z},L)$.
Note that if $E_Z=\emptyset$, then  the only  rank constraints 
are left 
in  (\ref{lpmb_o})-(\ref{lpmbb_9}).
Consequently,  $\pmb{x}$ and  $\pmb{y}$ are 0-1 incidence
vectors of
bases $X$ and $Y$ of matroids  ${M}_X$ and  ${M}_Y$, respectively (see~\cite{E71}).
Let us turn to other cases.
Assume that $E_X\not=\emptyset$,
 $E_Y\not=\emptyset$ and $E_Z\not=\emptyset$. 
Similarly as in Section~\ref{srrsp} we first reduce the sets $E_X$, $E_Y$ and $E_Z$ by removing  all elements~$e$
with $x_e=0$, or $y_e=0$, or $z_e=0$.
Let $\mathcal{F}(\pmb{x})=\{U\subseteq E_X\,:\, \sum_{e\in U}x_e =r_{M_X}(U)\}$ and 
$\mathcal{F}(\pmb{y})=\{U\subseteq E_Y\,:\, \sum_{e\in U}y_e =r_{M_Y}(U)\}$ denote
the sets of subsets of elements that indicate tight constraints (\ref{lpmb_0}), (\ref{lpmb_1}) and (\ref{lpmb_5}), (\ref{lpmb_6}) for $\pmb{x}$ and $\pmb{y}$, respectively.
Similarly we define the sets of elements that indicate tight constraints (\ref{lpmb_2}) and (\ref{lpmb_4}) with respect to  $(\pmb{x},\pmb{z},\pmb{y})$, namely
 $\mathcal{E}(\pmb{x},\pmb{z})=\{e\in E_{X}\cap E_{Z}\,:\, -x_e+z_e =  0\}$ and
 $\mathcal{E}(\pmb{z},\pmb{y})=\{e\in E_{Y}\cap E_{Z}\,:\, z_e -y_e =  0\}$.
 Let $\chi_X(W)$, $W\subseteq E_X$, (resp. $\chi_Z(W)$, $W\subseteq E_Z$, and $\chi_Y(W)$, $W\subseteq E_Y$)
 denote the characteristic vector in $\{0,1\}^{|E_X|}\times \{0\}^{|E_Z|}\times\{0\}^{|E_Y|}$
(resp.  $\{0\}^{|E_X|}\times \{0,1\}^{|E_Z|}\times\{0\}^{|E_Y|}$ and $\{0\}^{|E_X|}\times \{0\}^{|E_Z|}\times\{0,1\}^{|E_Y|}$)
that has~1 if $e\in W$ and 0 otherwise.

We recall that a family $\mathcal{L} \subseteq 2^E$ is a \emph{chain} if for any $A, B \in \mathcal{L}$, either $A \subseteq B$ or $B \subseteq A$
 (see, e.g.,~\cite{LRS11}).
 Let  $\mathcal{L}(\pmb{x})$ (resp.  $\mathcal{L}(\pmb{y})$) be
 a \emph{maximal chain subfamily} of $\mathcal{F}(\pmb{x})$ (resp. $\mathcal{F}(\pmb{y})$).
 The following lemma is a fairly straightforward
adaptation  of \cite[Lemma~5.2.3]{LRS11} to the problem under consideration and its proof
 may be handled in much the same way.
 \begin{lem}
 For $\mathcal{L}(\pmb{x})$ and  $\mathcal{L}(\pmb{y})$ the  following equalities:
\begin{align*}
 \mathrm{span}(\{ \chi_X(E_X(U))\,:\, U\in \mathcal{L}(\pmb{x})\})&=
 \mathrm{span}(\{ \chi_X(E_X(U))\,:\, U\in \mathcal{F}(\pmb{x})\}),\\
 \mathrm{span}(\{ \chi_Y(E_Y(U))\,:\, U\in \mathcal{L}(\pmb{y})\})&=
 \mathrm{span}(\{ \chi_Y(E_Y(U))\,:\, U\in \mathcal{F}(\pmb{y})\})
\end{align*}
hold.
 \label{chainlem}
 \end{lem}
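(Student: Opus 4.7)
The plan is to adapt the standard uncrossing argument for matroid rank constraints, following \cite[Lemma~5.2.3]{LRS11}. The key algebraic identity is $\chi_X(A)+\chi_X(B)=\chi_X(A\cup B)+\chi_X(A\cap B)$ for any $A,B\subseteq E_X$. Combined with submodularity of $r_{M_X}$, namely $r_{M_X}(A)+r_{M_X}(B)\geq r_{M_X}(A\cup B)+r_{M_X}(A\cap B)$, this yields a first step: if $A,B\in\mathcal{F}(\pmb{x})$, then summing the tight equalities $\sum_{e\in A}x_e=r_{M_X}(A)$ and $\sum_{e\in B}x_e=r_{M_X}(B)$ and comparing with the rank inequalities for $A\cup B$ and $A\cap B$ forces equality throughout, so $A\cup B,A\cap B\in\mathcal{F}(\pmb{x})$. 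Thus $\mathcal{F}(\pmb{x})$ is a sublattice of $2^{E_X}$.

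I would then argue by contradiction. Suppose some $U\in\mathcal{F}(\pmb{x})$ satisfies $\chi_X(U)\notin\mathrm{span}(\{\chi_X(T):T\in\mathcal{L}(\pmb{x})\})$, and choose such $U$ minimizing the number of $T\in\mathcal{L}(\pmb{x})$ that \emph{cross} $U$, i.e.\ for which $T\setminus U$, $U\setminus T$ and $T\cap U$ are all nonempty. If no $T\in\mathcal{L}(\pmb{x})$ crosses $U$, then $\mathcal{L}(\pmb{x})\cup\{U\}$ is still a chain contained in $\mathcal{F}(\pmb{x})$, contradicting maximality of $\mathcal{L}(\pmb{x})$. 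Otherwise pick a crossing $T$ and rewrite $\chi_X(U)=\chi_X(U\cup T)+\chi_X(U\cap T)-\chi_X(T)$; by the sublattice property, $U\cup T$ and $U\cap T$ lie in $\mathcal{F}(\pmb{x})$. Provided each of $U\cup T$ and $U\cap T$ is crossed by strictly fewer members of $\mathcal{L}(\pmb{x})$ than $U$, the minimality of $U$ puts $\chi_X(U\cup T)$ and $\chi_X(U\cap T)$ inside the span of $\{\chi_X(T'):T'\in\mathcal{L}(\pmb{x})\}$, and then $\chi_X(U)$ is there too, a contradiction. The second equality, for $\mathcal{L}(\pmb{y})$ and $\mathcal{F}(\pmb{y})$, follows by replacing $M_X,\pmb{x}$ by $M_Y,\pmb{y}$ throughout.

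The main obstacle is precisely this last bookkeeping step: verifying that uncrossing $U$ with $T$ strictly reduces the crossing count. The argument exploits the fact that every $T'\in\mathcal{L}(\pmb{x})$ is nested with $T$ (they lie in a common chain), which constrains how $T'$ can relate to $U\cup T$ and $U\cap T$; in particular, $T$ itself, which crossed $U$, no longer crosses either $U\cup T$ or $U\cap T$. This combinatorial check is standard and is exactly the one carried out in \cite[Lemma~5.2.3]{LRS11}, so the present proof can appeal to that reference for the remaining details.
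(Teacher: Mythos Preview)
Your proposal is correct and takes essentially the same approach as the paper: the paper gives no proof of its own here but simply notes that the lemma is a straightforward adaptation of \cite[Lemma~5.2.3]{LRS11}, and what you have written is precisely the standard uncrossing argument from that reference (submodularity of $r_{M_X}$ to show $\mathcal{F}(\pmb{x})$ is a lattice, then replacing a set of minimum crossing number by its union and intersection with a crossing chain member). One small remark: in the chain setting the relevant obstruction to adding $U$ to $\mathcal{L}(\pmb{x})$ is merely that some $T\in\mathcal{L}(\pmb{x})$ is incomparable with $U$, i.e.\ $T\setminus U\neq\emptyset$ and $U\setminus T\neq\emptyset$, without requiring $T\cap U\neq\emptyset$; but the uncrossing identity and the appeal to \cite{LRS11} go through unchanged in that boundary case as well.
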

 
The next lemma, which characterizes a vertex solution,  is analogous to Lemma~\ref{ranklem}.
Its proof is based on Lemma~\ref{chainlem} and is similar in spirit to the one of Lemma~\ref{ranklem}.
\begin{lem}
Let $(\pmb{x},\pmb{z},\pmb{y})$ be a vertex solution of 
$LP_{RRMB}(E_{X},\mathcal{I}_{X},E_{Y},\mathcal{I}_{Y},E_{Z},L)$ such that 
$x_e>0$, $e\in E_X$, $y_e>0$, $e\in E_Y$ and
 $z_e>0$, $e\in E_Z$. Then there exist chain families $\mathcal{L}(\pmb{x})\not=\emptyset$
 and $\mathcal{L}(\pmb{y})\not=\emptyset$ and subsets 
 $E(\pmb{x},\pmb{z}) \subseteq \mathcal{E}(\pmb{x},\pmb{z})$ and
 $E(\pmb{z},\pmb{y}) \subseteq \mathcal{E}(\pmb{z},\pmb{y})$
 that must satisfy the following:
 \begin{itemize}
 \item[(i)] $|E_X|+|E_Z|+|E_Y|=|\mathcal{L}(\pmb{x})|+|E(\pmb{x},\pmb{z})|+|E(\pmb{z},\pmb{y})|
 +|\mathcal{L}(\pmb{y})|+1$,
 \item[(ii)]
 the vectors in
 $\{ \chi_X(E_X(U))\,:\, U\in \mathcal{L}(\pmb{x})\}\cup \{ \chi_Y(E_Y(U))\,:\, U\in \mathcal{L}(\pmb{y})\}
 \cup \{ -\chi_X(\{e\})+ \chi_Z(\{e\})  \,:\, e\in E(\pmb{x},\pmb{z})\}
 \cup \{ \chi_Z(\{e\})- \chi_Y(\{e\})  \,:\, e\in E(\pmb{z},\pmb{y})\} \cup \{\chi_Z(E_Z)\}$ 
 are linearly independent.
\end{itemize} 
 \label{rankchainlem}
\end{lem}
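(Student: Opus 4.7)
The plan is to mirror the argument used for Lemma~\ref{ranklem}, substituting Lemma~\ref{chainlem} for Lemma~\ref{lamlem} throughout, and relying on the fact that the matroid rank function plays exactly the role that the graphic rank (``$|U|-1$'') played in the spanning tree case.

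First, since $(\pmb{x},\pmb{z},\pmb{y})$ is a vertex solution and all variables are strictly positive, the non-negativity constraints (\ref{lpmb_7})--(\ref{lpmbb_9}) are not tight, so $(\pmb{x},\pmb{z},\pmb{y})$ is uniquely determined by a set of $|E_X|+|E_Z|+|E_Y|$ linearly independent tight constraints drawn from (\ref{lpmb_0})--(\ref{lpmb_6}). I would build such a maximal independent set greedily. Start from the tight rank constraints indexed by $\mathcal{F}(\pmb{x})$ and $\mathcal{F}(\pmb{y})$; by Lemma~\ref{chainlem}, there exist maximal chain subfamilies $\mathcal{L}(\pmb{x})\subseteq \mathcal{F}(\pmb{x})$ and $\mathcal{L}(\pmb{y})\subseteq \mathcal{F}(\pmb{y})$ such that
\[
\mathrm{span}(\{\chi_X(E_X(U)) : U\in \mathcal{L}(\pmb{x})\})=\mathrm{span}(\{\chi_X(E_X(U)) : U\in \mathcal{F}(\pmb{x})\}),
\]
and analogously for $\pmb{y}$. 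Because (\ref{lpmb_0}) forces $E_X\in \mathcal{F}(\pmb{x})$ and (\ref{lpmb_5}) forces $E_Y\in \mathcal{F}(\pmb{y})$, both chain families are nonempty, giving $\mathcal{L}(\pmb{x})\neq\emptyset$ and $\mathcal{L}(\pmb{y})\neq\emptyset$.

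Next, I would adjoin the constraint (\ref{lpmb_3}), whose characteristic vector $\chi_Z(E_Z)$ has support disjoint from the $x$- and $y$-coordinates, so independence is preserved. Then I would extend by selecting a maximal subset of tight constraints from (\ref{lpmb_2}) and (\ref{lpmb_4}) that keeps the system linearly independent; call the corresponding element sets $E(\pmb{x},\pmb{z})\subseteq \mathcal{E}(\pmb{x},\pmb{z})$ and $E(\pmb{z},\pmb{y})\subseteq \mathcal{E}(\pmb{z},\pmb{y})$. By the span equality above, the resulting collection
\[
\{ \chi_X(E_X(U)) : U\in \mathcal{L}(\pmb{x})\} \cup \{\chi_Y(E_Y(U)) : U\in \mathcal{L}(\pmb{y})\}
\]
\[
\cup\, \{-\chi_X(\{e\})+\chi_Z(\{e\}) : e\in E(\pmb{x},\pmb{z})\} \cup \{\chi_Z(\{e\})-\chi_Y(\{e\}) : e\in E(\pmb{z},\pmb{y})\} \cup \{\chi_Z(E_Z)\}
\]
spans the same space as all tight constraints and is linearly independent, proving (ii). Equating its cardinality with the dimension $|E_X|+|E_Z|+|E_Y|$ of the ambient space yields (i).

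The only point requiring care is the greedy extension step: one must verify that the constraints associated with $\mathcal{L}(\pmb{x})\cup \mathcal{L}(\pmb{y})\cup\{(\ref{lpmb_3})\}$, once extended by a maximal independent subset of tight $(\ref{lpmb_2})$/$(\ref{lpmb_4})$ constraints, still span all tight constraints. This follows because any tight $(\ref{lpmb_1})$ or $(\ref{lpmb_6})$ constraint lies in the span of $\mathcal{L}(\pmb{x})$- or $\mathcal{L}(\pmb{y})$-constraints (by Lemma~\ref{chainlem}), and any tight $(\ref{lpmb_2})$ or $(\ref{lpmb_4})$ constraint not selected must be dependent on those selected plus the previously chosen ones, by maximality. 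Thus the constructed set indeed characterizes the vertex, establishing the lemma.
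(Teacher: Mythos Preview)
Your proof is correct and follows essentially the same approach as the paper, which simply states that the proof is analogous to that of Lemma~\ref{ranklem} with Lemma~\ref{chainlem} playing the role of Lemma~\ref{lamlem}. Your explicit justification that $E_X\in\mathcal{F}(\pmb{x})$ (via (\ref{lpmb_0})) forces $\mathcal{L}(\pmb{x})\neq\emptyset$ is the correct matroid analogue of Observation~\ref{lamob}.
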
 
Lemmas~\ref{chainlem} and \ref{rankchainlem}  now lead to the next two ones and their 
proofs  run as the proofs of Lemmas~\ref{lemx1y1} and~\ref{lemy1}.
 \begin{lem}
Let $(\pmb{x},\pmb{z},\pmb{y})$ be a vertex solution of 
$LP_{RRMB}(E_{X},\mathcal{I}_{X},E_{Y},\mathcal{I}_{Y},E_{Z},L)$   such that 
$x_e>0$, $e\in E_X$, $y_e>0$, $e\in E_Y$ and
 $z_e>0$, $e\in E_Z$. Then there is an element $e'\in E_X$ with $x_{e'}=1$ or
 an element $e''\in E_Y$ with $y_{e''}=1$.
 \label{lemx1y1chain}
\end{lem}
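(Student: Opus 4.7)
The plan is to mirror the token-counting argument of Lemma~\ref{lemx1y1} essentially verbatim, with the laminar families replaced by the chain families supplied by Lemma~\ref{rankchainlem}, and with the subgraph rank constraints replaced by matroid rank constraints. Suppose, for contradiction, that $0<x_e<1$ for every $e\in E_X$ and $0<y_e<1$ for every $e\in E_Y$. Constraints~(\ref{lpmb_2}) and~(\ref{lpmb_4}) then force $0<z_e<1$ for every $e\in E_Z$. Apply Lemma~\ref{rankchainlem} to obtain chain families $\mathcal{L}(\pmb{x})\neq\emptyset$, $\mathcal{L}(\pmb{y})\neq\emptyset$ and sets $E(\pmb{x},\pmb{z})\subseteq\mathcal{E}(\pmb{x},\pmb{z})$, $E(\pmb{z},\pmb{y})\subseteq\mathcal{E}(\pmb{z},\pmb{y})$ whose $|E_X|+|E_Z|+|E_Y|$ associated tight constraints are linearly independent.

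Next, assign two tokens to each element in $E_X\cup E_Y\cup E_Z$, giving $2(|E_X|+|E_Y|+|E_Z|)$ tokens in total, and redistribute them as in the spanning-tree proof: for $e\in E_X$, one token to the smallest $U\in\mathcal{L}(\pmb{x})$ with $e\in U$ and the second to the constraint~(\ref{lpmb_2}) for $e$ if $e\in E(\pmb{x},\pmb{z})$; symmetrically for $e\in E_Y$; and for $e\in E_Z$, the first token to~(\ref{lpmb_2}) for $e$ if $e\in E(\pmb{x},\pmb{z})$ and otherwise to~(\ref{lpmb_3}), and the second to~(\ref{lpmb_4}) for $e$ if $e\in E(\pmb{z},\pmb{y})$. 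By construction every tight constraint in~(\ref{lpmb_2}) and~(\ref{lpmb_4}) collects exactly two tokens.

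The analogue of Claim~\ref{dcla1} for rank constraints is where the chain structure enters. For a set $U\in\mathcal{L}(\pmb{x})$ with immediate predecessor $C\in\mathcal{L}(\pmb{x})$ in the chain (take $C=\emptyset$ if $U$ is the smallest), subtracting the tight rank constraint for $C$ from the one for $U$ yields
\[
\sum_{e\in U\setminus C} x_e \;=\; r_{M_X}(U)-r_{M_X}(C).
\]
Linear independence forbids $U\setminus C=\emptyset$; the right-hand side is a nonnegative integer, and because $0<x_e<1$ we must have $|U\setminus C|\geq 2$, so $U$ collects at least two tokens. The same argument applies to every $U\in\mathcal{L}(\pmb{y})$. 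The analogue of Claim~\ref{dcla2} for constraint~(\ref{lpmb_3}) is then copied case by case from the spanning-tree proof: the nested case analysis on whether $E_Z\setminus E(\pmb{x},\pmb{z})$, $E_Z\setminus E(\pmb{z},\pmb{y})$, $E_X\setminus E(\pmb{x},\pmb{z})$ and $E_Y\setminus E(\pmb{z},\pmb{y})$ are empty goes through unchanged. The degenerate subcases (where one of these differences is empty) collapse~(\ref{lpmb_3}) to an equation of the form $\sum_{e\in E_X}x_e=L$ or $\sum_{e\in E_Y}y_e=L$, which, because~(\ref{lpmb_0}) and~(\ref{lpmb_5}) are tight and $E_X$, $E_Y$ play the role of $V_X$, $V_Y$ of Observation~\ref{lamob}, forces $L=r_{M_X}(E_X)$ or $L=r_{M_Y}(E_Y)$ and duplicates one of the rank equations, contradicting linear independence. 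The non-degenerate subcases again use integrality of the right-hand side together with $0<x_e,y_e,z_e<1$ to produce the required surplus of tokens.

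Combining these observations gives either
\[
2(|E_X|+|E_Y|+|E_Z|)-2 \;\geq\; 2|\mathcal{L}(\pmb{x})|+2|\mathcal{L}(\pmb{y})|+2|E(\pmb{x},\pmb{z})|+2|E(\pmb{z},\pmb{y})|+1
\]
or
\[
2(|E_X|+|E_Y|+|E_Z|)-3 \;\geq\; 2|\mathcal{L}(\pmb{x})|+2|\mathcal{L}(\pmb{y})|+2|E(\pmb{x},\pmb{z})|+2|E(\pmb{z},\pmb{y})|,
\]
either of which yields $|E_X|+|E_Y|+|E_Z|>|\mathcal{L}(\pmb{x})|+|\mathcal{L}(\pmb{y})|+|E(\pmb{x},\pmb{z})|+|E(\pmb{z},\pmb{y})|+1$, contradicting Lemma~\ref{rankchainlem}(i). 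The main conceptual point—and the only place where one must verify something new relative to the spanning-tree case—is the chain-difference argument establishing at least two tokens per set of $\mathcal{L}(\pmb{x})\cup\mathcal{L}(\pmb{y})$; once that is in place, every other step transfers word-for-word.
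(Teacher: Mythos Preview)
Your proposal is correct and follows exactly the approach the paper intends: the paper does not give a standalone proof of Lemma~\ref{lemx1y1chain} but simply states that it ``runs as the proof of Lemma~\ref{lemx1y1},'' and you have faithfully carried out that translation, correctly replacing the laminar child--parent subtraction by the chain predecessor subtraction $\sum_{e\in U\setminus C}x_e=r_{M_X}(U)-r_{M_X}(C)$ and otherwise reproducing the token count and the case analysis of Claim~\ref{dcla2} verbatim. One cosmetic remark: in the chain setting $U\setminus C\neq\emptyset$ is automatic from $C\subsetneq U$, so you need not appeal to linear independence there; what you do need (and implicitly use) is that $r_{M_X}(U)-r_{M_X}(C)\geq 1$, which follows because the left side is positive.
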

 We now turn to
two cases: $E_X=\emptyset$;  $E_Y=\emptyset$.
Consider $E_X=\emptyset$, the second case is symmetrical. 
Observe that (\ref{lpmb_3}) and (\ref{lpmb_4}) and 
$E_{Z}\subseteq E_{Y}$ implies constraint~(\ref{um}).
 \begin{lem}
Let $\pmb{y}$ be a vertex solution of  linear program: 
(\ref{lpmb_o}),  (\ref{lpmb_5}),  (\ref{lpmb_6}),  (\ref{lpmbb_9}) and  (\ref{um}) such that
 $y_e>0$, $e\in E_Y$. Then 
there is an element $e'\in E_Y$ with $y_{e'}=1$.
Moreover 
using 
$\pmb{y}$ one can  construct a vertex solution of
$LP_{RRMB}(\emptyset,\emptyset,E_{Y},\mathcal{I}_{Y},E_{Z},L)$ 
 with $y_{e'}=1$ and  the cost of~$\pmb{y}$.
 \label{lemmby1}
\end{lem}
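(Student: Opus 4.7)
The plan is to mirror the proof of Lemma~\ref{lemy1}, with the spanning-tree rank $|U|-1$ replaced by the matroid rank $r_{M_Y}(U)$ and the laminar family provided by Lemma~\ref{lamlem} replaced by the chain family supplied by Lemma~\ref{chainlem}. First I would reduce to the situation in which $y_e>0$ for every $e\in E_Y$, which preserves the vertex property, feasibility, and the objective value. Using Lemma~\ref{chainlem}, I would then pick a maximal chain family $\mathcal{L}(\pmb{y})\subseteq\mathcal{F}(\pmb{y})$ whose tight rank inequalities span the same subspace as the full collection of tight rank inequalities; together with (\ref{um}) these furnish a maximal set of linearly independent tight constraints defining $\pmb{y}$. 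Hence either $|E_Y|=|\mathcal{L}(\pmb{y})|$, when (\ref{um}) is slack or linearly dependent on the rank constraints and $\pmb{y}$ is a vertex of the matroid basis polytope, hence the $0$--$1$ incidence vector of a basis of $M_Y$ by Edmonds' theorem~\cite{E71}; or $|E_Y|=|\mathcal{L}(\pmb{y})|+1$, when (\ref{um}) contributes one new independent tight constraint.

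In the second case I would argue by contradiction, supposing $0<y_e<1$ for every $e\in E_Y$. Allocating two tokens per element of $E_Y$ gives $2|E_Y|$ tokens, which I would redistribute by sending the first token of $e$ to the smallest $U\in\mathcal{L}(\pmb{y})$ containing $e$ and the second token of $e$ to (\ref{um}) whenever $e\in E_Z$. For consecutive chain members $C\subsetneq U$, subtracting the tight constraint at $C$ from the tight constraint at $U$ gives
\begin{equation*}
\sum_{e\in U\setminus C} y_e \;=\; r_{M_Y}(U)-r_{M_Y}(C),
\end{equation*}
a positive integer by independence of the chain constraints, so $|U\setminus C|\ge 2$ from $0<y_e<1$, and $U$ thereby collects at least two tokens. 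Similarly (\ref{um}) collects at least two tokens, because $L$ is a positive integer and $0<y_e<1$ on $E_Z$ forces $|E_Z|\ge 2$. If $E_Y\setminus E_Z=\emptyset$, then (\ref{um}) reduces to $\sum_{e\in E_Y}y_e=L$, and combining this with (\ref{lpmb_5}) forces $L=r_{M_Y}(E_Y)$, so (\ref{um}) duplicates the rank equality at $E_Y\in\mathcal{L}(\pmb{y})$, violating linear independence. Otherwise $E_Y\setminus E_Z\ne\emptyset$ leaves at least one unspent second token, giving $2|E_Y|\ge 2|\mathcal{L}(\pmb{y})|+2+1$ and hence $|E_Y|>|\mathcal{L}(\pmb{y})|+1$, contradicting the characterization count.

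For the second assertion I would set $z_e=y_e$ for every $e\in E_Z$, so that (\ref{lpmb_3}) and (\ref{lpmb_4}) both hold as equalities while (\ref{lpmb_2}) is vacuous since $E_X=\emptyset$; the objective value is unchanged because the $z$-variables are absent from (\ref{lpmb_o}), and the vertex property transfers because the $|E_Z|$ equalities $z_e=y_e$ together with $\sum_{e\in E_Z}z_e=L$ pin $\pmb{z}$ down uniquely from $\pmb{y}$. The main obstacle is the token-counting step, specifically verifying that each chain constraint of the minimal enclosing set $U$ receives at least two tokens; the subtraction trick along the chain, combined with the integrality of the matroid rank function and of $L$, is precisely what makes this accounting close up with the strict inequality needed to contradict Lemma~\ref{rankchainlem}(i).
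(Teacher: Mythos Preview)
Your proposal is correct and follows essentially the same route as the paper, which merely remarks that the proof ``runs as the proof of Lemma~\ref{lemy1}'' with the laminar family replaced by the chain family from Lemma~\ref{chainlem}; your write-up is in fact more explicit than the paper's own treatment. One minor caveat: in the second assertion, the choice $z_e=y_e$ for all $e\in E_Z$ satisfies the equality $\sum_{e\in E_Z}z_e=L$ only when (\ref{um}) is tight; in the first case, where $\pmb{y}$ is already the $0$--$1$ incidence vector of a basis but (\ref{um}) may be slack, you need instead to pick any $L$ elements of $\{e\in E_Z: y_e=1\}$ and set $z_e=1$ there and $z_e=0$ elsewhere. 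This is a trivial adjustment, and the paper's proof is equally terse on this point.
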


\begin{algorithm}
\begin{small}
$M_X=(E_{X},\mathcal{I}_{X})\leftarrow (E,\mathcal{I})$,
$M_Y=(E_{Y},\mathcal{I}_{Y})\leftarrow (E,\mathcal{I})$, $L\leftarrow r_{M}(E)-k$,
$X\leftarrow \emptyset$, $Y\leftarrow \emptyset$, $Z\leftarrow \emptyset$\;
\While{$E_X\not= \emptyset$  \KwOR $E_Y\not= \emptyset$}{
Find an optimal vertex solution $(\pmb{x}^*,\pmb{z}^*,\pmb{y}^*)$  of  $LP_{RRMB}(E_{X},\mathcal{I}_{X},E_{Y},\mathcal{I}_{Y},E_{Z},L)$\;
\label{steplpmb}
\ForEach{$e\in E_Z$ with $z^{*}_e=0$ }{$E_Z\leftarrow E_Z\setminus \{e\}$\;}
\label{stepz0mb}
\ForEach{$e\in E_X$ with $x^{*}_e=0$ }{$M_{X}\leftarrow M_{X}\backslash e$\;}
\label{stepx0mb}
\ForEach{$e\in E_Y$ with $y^{*}_e=0$ }{$M_{Y}\leftarrow M_{Y}\backslash e$\;}
\label{stepy0mb}
\If{there exists element $e\in E_X$ with $x^{*}_e=1$ \label{stepx1mb}}{
$X\leftarrow X\cup \{e\}$  \; 
$M_{X}\leftarrow M_{X}/ e$ \;  \label{stepxx1mb}
}
\If{there exists element $e\in E_X$ with $y^{*}_e=1$\label{stepy1mb}}{
$Y\leftarrow Y\cup \{e\}$\;
$M_{Y}\leftarrow M_{Y}/ e$\; \label{stepyy1mb}
}
\If{there exists element $e\in E_Z$ such that  $e\in X\cap Y$ \label{stepzxymb}}{
\tcc{Here $(\pmb{x}^*,\pmb{z}^*,\pmb{y}^*)$ with $\sum_{e\in E_Z}z^{*}_e=L$
can be always converted, preserving the cost, to 
$(\pmb{x}^*,\pmb{z}',\pmb{y}^*)$ with $z'_{e'}=1$ and $\sum_{e\in E_Z\setminus \{e'\}}z'_e=L-1$}
$Z\leftarrow Z\cup \{e\}$,$L\leftarrow L-1$\;
$E_Z\leftarrow E_Z\setminus \{e\}$\; \label{stepzxyzmb}
}
}
\Return{$X$, $Y$, $Z$}

  \caption{Algorithm for  RR MB}
 \label{algrrmb}
\end{small} 
\end{algorithm}
We are thus led to
 the main result of this section. Its proof 
 follows by the same arguments as for RR ST.
\begin{thm}
Algorithm~\ref{algrrmb} solves RR~MB in polynomial time.
\end{thm}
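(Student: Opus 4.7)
The plan is to mirror the proof of the theorem for Algorithm~\ref{algrrst} step by step, substituting matroid operations for graph operations and the chain-family lemmas~\ref{chainlem}--\ref{lemmby1} for their graphic counterparts. Concretely I would verify in order: (i) polynomial termination, (ii) that the returned $X$ and $Y$ are bases of $M$, (iii) the recoverability condition $|X\cap Y|\geq r_M(E)-k$, and (iv) optimality via the LP lower bound.

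For termination, each iteration falls into one of three cases. If $E_Z=\emptyset$, then the only remaining constraints of $LP_{RRMB}(\cdot)$ are the matroid base inequalities, so $\pmb{x}^*$ and $\pmb{y}^*$ are incidence vectors of bases~\cite{E71} and the algorithm stops. If $E_X=\emptyset$ (or symmetrically $E_Y=\emptyset$), then Lemma~\ref{lemmby1} yields an element of $E_Y$ (respectively $E_X$) with value one. Otherwise $E_X,E_Y,E_Z$ are all nonempty and Lemma~\ref{lemx1y1chain} applies. In every non-terminating case at least one contraction occurs, so $|E_X|+|E_Y|$ strictly decreases; this bounds the number of iterations by $O(|E|)$, and each iteration solves a polynomial-size LP with polynomial separation over the matroid rank constraints~\cite{C84}.

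Correctness of $X$ and $Y$ as bases of $M$ then follows from the standard interpretation of contraction as ``include in the basis'' and deletion as ``discard'' together with the feasibility constraints~(\ref{lpmb_0}) and~(\ref{lpmb_5}). For the intersection constraint, I would prove by induction on iterations the invariant $L+|Z|=r_M(E)-k$, since Steps~\ref{stepzxymb}--\ref{stepzxyzmb} decrement $L$ by exactly one each time an element enters $Z$. At termination it remains to show $L=0$. Suppose instead $L\geq 1$; then $|E_Z|\geq L\geq 1$ and any surviving $e'\in E_Z$ satisfies $z^*_{e'}>0$. Such $e'$ cannot lie in $X\cap Y$, else Steps~\ref{stepzxymb}--\ref{stepzxyzmb} would have removed it, so at least one of the constraints $-x_{e'}+z_{e'}\leq 0$ or $z_{e'}-y_{e'}\leq 0$ is still present in the current LP, forcing $e'\in E_X$ or $e'\in E_Y$ and contradicting the loop-exit condition. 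Hence $L=0$ and $|X\cap Y|\geq |Z|=r_M(E)-k$.

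Finally, the cost bound $\sum_{e\in X}C_e+\sum_{e\in Y}(c_e+d_e)\leq \mathrm{OPT}_{LP}$, where $\mathrm{OPT}_{LP}$ is the optimum of the initial $LP_{RRMB}(E,\mathcal{I},E,\mathcal{I},E,r_M(E)-k)$, is preserved throughout because variable fixings to $0$ (via deletion) and to $1$ (via contraction, absorbed into the objective offset) are the usual cost-preserving iterative-rounding steps; since $\mathrm{OPT}_{LP}$ lower-bounds the integer optimum of~(\ref{mpmb}), the returned solution is optimal. The only genuine obstacle in this transfer was the token-counting behind Lemmas~\ref{lemx1y1chain} and~\ref{lemmby1}, which however is already discharged earlier in the section via the chain analogue Lemma~\ref{chainlem} of Lemma~\ref{lamlem}; the theorem itself then reduces to a faithful transcription of the RR~ST proof with ``vertex sets of $G_X,G_Y$'' replaced by ``ground sets of $M_X,M_Y$'' and ``$|V|-1$'' replaced by the matroid ranks.
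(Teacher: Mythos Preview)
Your proposal is correct and follows exactly the approach intended by the paper, which simply states that the proof ``follows by the same arguments as for RR~ST''; you have faithfully transcribed the proof of Theorem~1, replacing the graph-specific lemmas by their matroid analogues (Lemmas~\ref{lemx1y1chain} and~\ref{lemmby1}) and $|V|-1$ by $r_M(E)$. The only cosmetic slip is the phrase ``polynomial-size LP'': the LP has exponentially many rank constraints and is solved via the ellipsoid method with the polynomial separation oracle of~\cite{C84}, which you in fact cite correctly.
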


\section{Conclusions}

In this paper we have shown that the recoverable version of the minimum spanning tree problem with interval edge costs is polynomially solvable. We have thus resolved a problem which has been open to date.  We have applied a technique called an iterative relaxation. 
It has been turned out that
the algorithm proposed for the minimum spanning tree can be easily generalized to the recoverable version of the matroid basis problem with interval element costs. Our polynomial time algorithm is based on 
solving linear programs. Thus
 the next step should be designing a polynomial time combinatorial algorithm for this problem, which is an interesting subject of further research.

\subsubsection*{Acknowledgements}
The first author was supported by Wroc{\l}aw University of Technology,
grant S50129/K1102. 
The second and the third authors were
supported by
 the National Center for Science (Narodowe Centrum Nauki), grant  2013/09/B/ST6/01525.


\begin{thebibliography}{10}

\bibitem{AMO93}
R.~K. Ahuja, T.~L. Magnanti, and J.~B. Orlin.
\newblock {\em Network Flows: theory, algorithms, and applications}.
\newblock Prentice Hall, Englewood Cliffs, New Jersey, 1993.

\bibitem{BS03}
D.~Bertsimas and M.~Sim.
\newblock Robust discrete optimization and network flows.
\newblock {\em Mathematical Programming}, 98:49--71, 2003.

\bibitem{B11}
C.~B{\"u}sing.
\newblock {\em Recoverable {R}obustness in {C}ombinatorial {O}ptimization}.
\newblock PhD thesis, Technical University of Berlin, Berlin, 2011.

\bibitem{B12}
C.~B{\"u}sing.
\newblock Recoverable robust shortest path problems.
\newblock {\em Networks}, 59:181--189, 2012.

\bibitem{SAO09}
O.~\c{S}eref, R.~K. Ahuja, and J.~B. Orlin.
\newblock Incremental {N}etwork {O}ptimization: {T}heory and {A}lgorithm.
\newblock {\em Operations Research}, 57:586--594, 2009.

\bibitem{C84}
W.~H. Cunningham.
\newblock Testing membership in matroid polyhedra.
\newblock {\em Journal of Combinatorial Theory, Series B}, 36:161--188, 1984.

\bibitem{E71}
J.~Edmonds.
\newblock Matroids and the greedy algorithm.
\newblock {\em Mathematical Programming}, 1:127--136, 1971.

\bibitem{KKZ14}
A.~Kasperski, A.~Kurpisz, and P.~Zieli{\'n}ski.
\newblock Recoverable robust combinatorial optimization problems.
\newblock In {\em Operations Research Proceedings 2012}, pages 147--153, 2014.

\bibitem{KZ11}
A.~Kasperski and P.~Zieli{\'n}ski.
\newblock On the approximability of robust spanning problems.
\newblock {\em Theoretical Computer Science}, 412:365--374, 2011.

\bibitem{KY97}
P.~Kouvelis and G.~Yu.
\newblock {\em Robust Discrete Optimization and its applications}.
\newblock Kluwer Academic Publishers, 1997.

\bibitem{LRS11}
L.~C. Lau, R.~Ravi, and M.~Singh.
\newblock {\em Iterative {M}ethods in {C}ombinatorial {O}ptimization}.
\newblock Cambridge University Press, 2011.

\bibitem{LLMS09}
C.~Liebchen, M.~E. L{\"{u}}bbecke, R.~H. M{\"{o}}hring, and S.~Stiller.
\newblock The {C}oncept of {R}ecoverable {R}obustness, {L}inear {P}rogramming
  {R}ecovery, and {R}ailway {A}pplications.
\newblock In {\em Robust and {O}nline {L}arge-{S}cale {O}ptimization}, volume
  5868 of {\em Lecture Notes in Computer Science}, pages 1--27.
  Springer-Verlag, 2009.

\bibitem{TLMLAW95}
T.~L. Magnanti and L.~A. Wolsey.
\newblock Optimal {T}rees.
\newblock In M.~O. Ball, T.~L. Magnanti, C.~L. Monma, and G.~L. Nemhauser,
  editors, {\em {N}etwork {M}odels, {H}andbook in {O}perations {R}esearch and
  {M}anagement {S}cience}, volume~7, pages 503--615. North-Holland, Amsterdam,
  1995.

\bibitem{NO13}
E.~Nasrabadi and J.~B. Orlin.
\newblock Robust optimization with incremental recourse.
\newblock {\em CoRR}, abs/1312.4075, 2013.

\bibitem{o92}
J.~G. Oxley.
\newblock {\em Matroid theory}.
\newblock Oxford University Press, 1992.

\end{thebibliography}

\end{document}